\newtheorem{theorem}{Theorem}[section]
\newtheorem{lemma}{Lemma}[section]
\newtheorem{corollary}{Corollary}[section]
\newtheorem{claim}{Claim}[section]
\newtheorem{fact}{Fact}[section]
\newcommand{\qed}{\hfill $\Box$ \bigbreak}
\newenvironment{proof}{\noindent {\bf Proof.}}{\qed}
\newcommand{\cA}{{\cal A}}
\newcommand{\remove}[1]{}
\def\idtt#1{\ensuremath{\mathtt{#1}}}
\def\cross{\idtt{cross}}
\def\up{\idtt{up}}
\def\down{\idtt{down}}
\begin{document}

\baselineskip  0.19in 
\parskip     0.05in 
\parindent   0.3in 

\title{{\bf Sniffing Helps to Meet: Deterministic Rendezvous of Anonymous Agents in the Grid
 }}
\date{}
\newcommand{\inst}[1]{$^{#1}$}

\author{
Younan Gao\inst{1},
Andrzej Pelc\inst{1}$^,$\footnote{Partially supported by NSERC discovery grant 2018-03899 and by the Research Chair in Distributed Computing at the Universit\'e du Qu\'{e}bec en Outaouais.}\\
\inst{1} Universit\'{e} du Qu\'{e}bec en Outaouais, Gatineau, Canada.\\
E-mails: \url{ gaoy03@uqo.ca}, \url{ pelc@uqo.ca}\\
}

\date{ }
\maketitle

\begin{abstract}
Two identical anonymous mobile agents have to meet at a node of the infinite oriented grid whose nodes are unlabeled. 
This problem is known as rendezvous.
The agents execute the same deterministic algorithm.
Time is divided into rounds, and in each round each agent can either stay idle at the current node or move to an adjacent node.
An adversary places the agents at two nodes of the grid at a distance at most $D$, and wakes them up in possibly different rounds. Each agent starts executing the algorithm in its wakeup round. 

If agents cannot leave any marks on visited nodes then they can never meet, even if they start simultaneously at adjacent nodes and know it. Hence, we assume that each agent marks any unmarked node it visits, and that an agent can distinguish if a node it visits has been previously marked or not. (If agents are ants then marking a node means secreting a chemical known as pheromone that can be subsequently sniffed). The time of a rendezvous algorithm is the number of rounds between the wakeup of the later agent and rendezvous. We ask the question whether the capability of marking nodes enables the agents to meet, and if so, what is the fastest rendezvous algorithm.

We consider this rendezvous problem under three scenarios. In the first scenario, agents know $D$ but may start with arbitrary delay. In the second scenario, they 
start simultaneously but do not have any {\em a priori} knowledge. In the third, most difficult scenario, we do not make any of the above facilitating assumptions. Agents start with arbitrary delay and they do not have any a priori knowledge. We prove that in the first two scenarios rendezvous can be accomplished in time $O(D)$. This is clearly optimal.
For the third scenario, we prove that there does not exist any rendezvous algorithm working in time $o(D^{\sqrt{2}})$, and we show an algorithm working in time $O(D^2)$. The above negative result shows a separation between the optimal complexity in the two easier scenarios and the optimal complexity in the most difficult scenario.

\vspace{1ex}

\noindent {\bf Keywords:}  rendezvous, deterministic algorithm, mobile agent, grid, mark, time. 
\end{abstract}

\vfill

\vfill

\thispagestyle{empty}
\setcounter{page}{0}
\pagebreak

\section{Introduction}

\noindent
{\bf The background and the problem.}
Two identical anonymous mobile agents have to meet at a node of the infinite oriented grid whose nodes are unlabeled. 
This problem is known as rendezvous.
Agents execute the same deterministic algorithm.
Time is divided into rounds, and in each round, each agent can either stay idle at the current node, or move to an adjacent node.
The adversary places the agents at two nodes of the grid, at distance at most $D$, and wakes them up in 
possibly different rounds. Each agent starts executing the algorithm in its wakeup round. 

If agents cannot leave any marks on visited nodes then they can never meet, even if they start simultaneously at adjacent nodes $u$ and $v$ and know it. Indeed, since agents execute the same deterministic algorithm, the trajectory of one agent is a shift of the trajectory of the other agent by the vector $(u,v)$, and thus agents are never at the same node in the same round.
Hence, we assume that each agent marks any unmarked node it visits, and that an agent can distinguish if a node it visits has been previously marked or not. This is a natural assumption: people walking in the snow leave traces that can be subsequently seen,  ants foraging for food secrete a chemical known as pheromone that can be subsequently sniffed \cite{BBBG,LR}, and territorial animals mark their territory, e.g., by leaving traces of urine, to be noticed by others. 

The time of a rendezvous algorithm is the number of rounds between the wakeup of the later agent and rendezvous. 
We ask the question whether the capability of marking nodes enables the agents to meet, and if so, what is the fastest rendezvous algorithm.

\noindent
{\bf The model.}
Mobile agents navigate in the infinite anonymous oriented grid $\mathbb{Z} \times \mathbb{Z}$. Nodes of the grid do not have labels but ports at each node are coherently labeled $N,E,S,W$. This means that, for any integers $a$ and $b$,  the ports corresponding to edge
$\{(a,b), (a, b+1)\}$ are $N$ at $(a,b)$ and $S$ at  $(a, b+1)$, and ports corresponding to edge $\{(a,b), (a+1, b)\}$ are $E$ at $(a,b)$ and $W$ at $(a+1, b)$.
(The grid is not directed, in the sense that each agent can move in both directions traversing an edge).
Directions corresponding to ports $N,E,S,W$ are called, respectively, North, East, South, West.
Lines East-West of the grid are called {\em horizontal} and lines North-South are called {\em vertical}.
We say that a node $u$ is above (resp. below) a node $v$, if $u$ is North (resp. South) of the horizontal line containing $v$.
We say that a node $u$ is left (resp. right) of a node $v$, if $u$ is West (resp. East) of the vertical line containing $v$.
We use the term {\em distance} to mean the distance in the grid: the distance between nodes $(a,b)$ and $(c,d)$, denoted as
$dist((a,b), (c,d))$, is equal to $|a-c|+|b-d|$.

We do not impose any bound on the memory of the agents. From a computational point of view, they are modelled as Turing machines.
Agents execute the same deterministic algorithm in synchronous rounds.
In each round, each agent can either stay idle at the current node or move to an adjacent node.
The adversary places the agents at two nodes of the grid at a distance at most $D$, and wakes them up in possibly different rounds. Each agent starts executing the algorithm in its wakeup round. The initial node of each agent is called its {\em base}.
If agents cross each other traversing the same edge in opposite directions in some round, they do not even notice this fact.
All the above assumptions are standard in the literature on synchronous rendezvous (cf., e.g.,  \cite{BP,BDL,TSZ07}).

In the beginning, only the bases of the agents are marked. The first agent that visits a node marks it. All marks are identical and non-removable. An agent visiting a node can distinguish whether it is marked or not. Notice that, since each agent remembers its trajectory, even though all marks are identical, an agent visiting an already marked node can infer whether the node was marked by itself or by the other agent. 
A node visited by an agent is {\em foreign} for this agent, if it was already marked at the time of the first visit of this node by this agent. Otherwise, a node is called {\em domestic} for this agent. 
Hence, any agent visiting a marked node can determine if this node is foreign or domestic for it.

The time of a rendezvous algorithm is the number of rounds between the wakeup round of the later agent and the first round in which the agents are at the same node.
If the earlier agent meets the later agent at the starting position of the later agent before its wakeup, we consider rendezvous time to be 0.

\noindent
{\bf Our results.}
We consider the above described rendezvous task under three scenarios. In the first scenario, agents know $D$ but may start with arbitrary delay. In the second scenario, they 
start simultaneously but do not have any {\em a priori} knowledge. In the third, most difficult scenario, we do not make any of the above facilitating assumptions. Agents start with arbitrary delay and they do not have any {\em a priori} knowledge. We prove that in the first two scenarios rendezvous can be done in time $O(D)$. This is clearly optimal.
For the third scenario we prove that there does not exist a rendezvous algorithm working in time $o(D^{\sqrt{2}})$, and we show an algorithm working in time $O(D^2)$.
The above negative result shows a separation between the two easier scenarios and the most difficult scenario, regarding the optimal complexity of rendezvous.

In all scenarios, the main difficulty that has to be overcome by a rendezvous algorithm is symmetry breaking. Since agents are identical and execute the same deterministic algorithm, the parts of their trajectories before visiting a foreign node  are shifts of each other  (this is the reason why deterministic rendezvous of anonymous agents is impossible without marking nodes, even with simultaneous start). The only way to guarantee rendezvous is to design rules of behavior of an agent based on the information in which rounds  it visited its foreign nodes and in which direction it moved in the rounds of such visits. We design these rules so that, regardless of the relative positions of the bases of the agents (which are decided by the adversary), one agent stops at some node of the trajectory of the other agent which then meets it at this node. Moreover, the meeting should occur fast. The challenge is to design the rules in such a way that this property be satisfied for any relative positions of the bases, without knowing them. This design and the proof of correctness and complexity of the resulting rendezvous algorithms are the main algorithmic contributions of this paper. Another contribution, implied by our negative result, is showing that the optimal complexity of rendezvous in the most difficult scenario is strictly larger than in any of the easier ones.

 \noindent
{\bf Related work.}
The task of rendezvous in graphs has been extensively investigated in distributed computing literature, both under randomized and deterministic scenarios.
It is usually assumed that nodes do not have distinct identities, and agents cannot mark nodes, although rendezvous was also considered in labeled graphs \cite{CCGKM,MP}, or when marking nodes by agents using tokens is allowed \cite{KKM,KKSS}.
A survey of  randomized rendezvous in various models  can be found in the classic book
\cite{alpern02b}. 
Deterministic rendezvous in graphs has been surveyed in \cite{Pe2}.
 
Most of the literature on rendezvous considered finite graphs and assumed the synchronous scenario, where
agents move in rounds. 
 In \cite{TSZ07}, the authors presented rendezvous algorithms with time polynomial in the size of the graph and the length of agents' labels.
Gathering many agents in the presence of Byzantine agents was discussed in \cite{BDL}.
The amount of memory required by the agents to achieve deterministic rendezvous was investigated in  \cite{CKP} for arbitrary finite graphs.

%

Fewer papers were devoted to synchronous rendezvous in infinite graphs. In all cases, agents had distinct identities but could not mark the visited nodes.
In \cite{CCGKM}, the authors considered rendezvous in infinite trees and grids,  using the strong assumption that the agents know their location in the environment. In \cite{BP2}, rendezvous in infinite trees was considered. In \cite{BP}, the authors presented a rendezvous algorithm working for arbitrary, even infinite, graphs with nodes of finite degrees. Rendezvous for arbitrary graphs, including those with nodes of infinite degree, was studied in \cite{PP}.

Among infinite graphs, the infinite oriented grid (considered in this paper) plays a special role in the context of the rendezvous problem because it is closely related to the problem of {\em approach in the plane}, in which agents equipped with compasses start in arbitrary points of the plane, and have to get at a distance at most 1. Rendezvous in the grid implies approach in the plane, hence it was studied by many authors under a variety of scenarios \cite{BCGIL,BP,BBDDP,CCGKM,DP}. 

Several authors investigated asynchronous rendezvous in the plane \cite{CFPS,fpsw} and 
in graphs
	\cite{BBDDP,BCGIL,DPV}.
	In the latter scenario, the agent chooses the edge to traverse, but the adversary controls the speed of the agent. Under this assumption, rendezvous
	at a node cannot be guaranteed even in the two-node graph. Hence, in the asynchronous scenario, the agents are permitted to meet inside an edge.
	In \cite{BCGIL}, the authors designed almost optimal algorithms for asynchronous rendezvous in infinite multidimensional grids, under a strong assumption that an agent knows its position in the grid. In \cite{BBDDP,DPV} this assumption was replaced by a much weaker assumption that agents have distinct identities. In \cite{BBDDP}, a polynomial-cost algorithm was designed for the infinite oriented two-dimensional grid, and in \cite{DPV} -- for arbitrary finite graphs.

\section{Preliminaries and Terminology}

When an agent visits a foreign node for the first time, we say that it made a $Dir$-{\em hit}, where $Dir\in \{N,E,S,W\}$, if in the round of this visit it went in direction $Dir$. Thus, for example,  if at the first visit of a foreign node the agent was going North, we say that it made an $N$-hit. A hit is called {\em horizontal} if it is an $E$-hit or a $W$-hit, and it is called {\em vertical} if it is an $N$-hit or an $S$-hit.
A set of hits is called {\em homogeneous}, if they are all $Dir$-hits, for the same direction $Dir$. Otherwise, it is called {\em heterogeneous}.

In all our algorithms, we will use instruction {\tt go} $Dir$ {\tt for} $x$ {\tt rounds}, where $Dir\in \{N,E,S,W\}$ and $x$ is a positive integer. It tells
the agent to make $x$ steps in direction $Dir$ in $x$ consecutive rounds. 

In two of our algorithms we use the following procedure $\cross(x)$, defined for any positive integer parameter $x$.

\noindent
{\bf Procedure} $\cross(x)$

\noindent
{\tt go} $W$ for $x$ rounds;
{\tt go} $E$ for $x$ rounds;\\
{\tt go} $N$ for $x$ rounds;
{\tt go} $S$ for $x$ rounds;\\
{\tt go} $E$ for $x$ rounds;
{\tt go} $W$ for $x$ rounds;\\
{\tt go} $S$ for $x$ rounds;
{\tt go} $N$ for $x$ rounds;\\

\section{Known Upper Bound on Initial Distance}\label{sec:known}

In this section, we consider the scenario of the rendezvous problem, in which both agents know a common upper bound $D$ on their initial distance  but might start in different rounds. Our aim is to present and analyze an algorithm that guarantees rendezvous in time $O(D)$ under this scenario.

\subsection{The algorithm}
In this subsection, we present the algorithm, while the proof of its correctness and complexity is deferred to the next subsection.

Algorithm {\tt Known Upper Bound} consists of two parts. The first part is the execution of 
procedure $\cross(D)$.
During this execution, the agent stores the following information: 
the nodes where it makes hits and the directions of the moves when these hits occur.

Upon completion of procedure $\cross(D)$ the agent is back at its base and initiates the second part.
The second part of the algorithm is one of the two following actions: 
\begin{itemize}
	\item Action $I$: Repeat procedure $\cross(D)$ until rendezvous.
	\item Action $II$: Move to the earliest visited foreign node and stay there forever.	
	\end{itemize}
The decision which of these actions has to be chosen is given in the following table. The input in the first column of the table stores the information recorded by the agent after completing procedure $\cross(D)$ in the first part of the algorithm. The action corresponding to a given input is in the second column.  Note that all inputs listed in the table are pairwise exclusive, hence the choice of the action is unambiguous.

	\begin{table}[!h]
\begin{center}
	\begin{tabular}{ |c|c| } 
		\hline
		Input & Action		\\
		\hline
		\hline
		No hits & $I$  \\ 
		\hline
		Exactly two hits, with
		a horizontal one followed by a vertical one, or vice versa & $II$  \\ 
		\hline
		Only $S$-hits (an arbitrary positive number of them) & $II$  \\ 
		\hline
		Only $N$-hits (an arbitrary positive number of them) & $I$  \\ 
		\hline
		Only horizontal hits, starting with a $E$-hit & $II$\\
		\hline
		Only horizontal hits, starting with a $W$-hit & $I$\\
		\hline
		Only vertical but heterogeneous hits and condition $C$ & $I$ \\
		\hline
		Only vertical but heterogeneous hits and condition $\neg C$ & $II$\\
		\hline
	\end{tabular}
		\caption{\label{tab-decision-1}The decision table for Algorithm {\tt Known Upper Bound}.  C is the following condition: An $N$-hit happens at the North-most visited node.   }
\end{center}
\end{table}

\subsection{Correctness and complexity}\label{cor-cor-bounded}

The high-level idea of the proof of correctness is to show that,  for any possible relative positions of the bases of the agents, each of the agents must get, upon completing procedure $\cross(D)$, one of
the 8 types of inputs enumerated in the decision table of the algorithm, and choose a different action.
This is enough to guarantee rendezvous. Indeed, 
let $a_I$ and $a_{II}$ be the agents that perform actions $I$ and $II$, respectively.
After performing action $II$, agent $a_{II}$ stays at a node $u$ of the trajectory of $a_I$, performed in the execution of procedure $\cross(D)$ by $a_I$.
On the other hand, $a_{I}$ keeps performing $\cross(D)$ until rendezvous. Upon at most one execution of procedure $\cross(D)$ performed by $a_{I}$ after $a_{II}$ stabilizes at $u$, rendezvous must occur.

Observe that there are three possible types of relative positions of the bases of the agents. 
\begin{itemize}
\item both bases are not on the same line of the grid, 
\item both bases are on the same horizontal line 
\item both bases are on the same vertical line.
\end{itemize}

The proof of correctness of Algorithm {\tt Known Upper Bound} is split into three parts corresponding to these types of positions.
In the sequel, we call one of the agents $B$ (blue) and the other agent $R$ (red).

\subsubsection{Both bases are not on the same line}

\begin{theorem}\label{th:different lines}
If the bases of the agents are not on the same line then the agents choose different actions in the second part of Algorithm {\tt Known Upper Bound}.
\end{theorem}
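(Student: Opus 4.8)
The plan is to fix a coordinate system with $B$'s base at the origin and $R$'s base at $(a,b)$, where $a\neq 0$ and $b\neq 0$ (the bases share no line) and $|a|+|b|\le D$. During its execution of $\cross(D)$ each agent traces the ``plus sign'' consisting of the four arms of length $D$ centered at its base. First I would observe that the plus traced by $B$ (lying on the lines $x=0$ and $y=0$) and the plus traced by $R$ (lying on the lines $x=a$ and $y=b$) meet in exactly two nodes: $P_1=(a,0)$, which lies on $B$'s horizontal arm and on $R$'s vertical arm, and $P_2=(0,b)$, which lies on $B$'s vertical arm and on $R$'s horizontal arm. Since $|a|,|b|\le D$, both points indeed belong to both crosses. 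Because every node of $B$'s cross other than these two lies on no line traversed by $R$ (and the pre-marked base $(a,b)$ of $R$ is never visited by $B$, as $a,b\neq 0$), these are the only nodes that can ever be foreign for $B$; symmetrically for $R$. Hence each agent makes at most two hits, and only at $P_1$ and $P_2$.

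Next I would argue that at each of $P_1,P_2$ exactly one agent makes a hit. Both agents eventually visit each of these nodes; if their first visits to some $P_i$ happen in the same round they meet there and rendezvous is achieved, so we may assume otherwise, in which case the agent visiting $P_i$ later finds it already marked and makes the hit there. Regardless of the possibly different starting rounds, this yields four cases according to which agent hits at $P_1$ and which at $P_2$. The crucial structural point I would record before the case analysis is that the hit direction of an agent at an arm node equals the outward direction of the arm on which the node lies, since the first visit of an arm node always occurs on the outgoing leg. Thus a hit of $B$ at $P_1$ is horizontal while a hit of $R$ at $P_1$ is vertical, and a hit of $B$ at $P_2$ is vertical while a hit of $R$ at $P_2$ is horizontal, with the concrete direction ($E/W$, resp.\ $N/S$) fixed by the signs of $a$ and $b$.

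With this in hand the proof reduces to checking the decision table in each of the four cases. If one agent hits at both points it receives exactly one horizontal and one vertical hit and selects Action $II$, while the other agent has no hits and selects Action $I$. In the two remaining cases each agent gets exactly one hit, and both hits are of the same type (both horizontal when $B$ hits $P_1$ and $R$ hits $P_2$, both vertical in the complementary case); here the key is that the two agents obtain \emph{opposite} directions --- for instance, when $a>0$ the $B$-hit at $P_1$ is an $E$-hit whereas the $R$-hit at $P_2$ is a $W$-hit --- so that one agent falls under an Action-$I$ row and the other under an Action-$II$ row. Note that in this scenario an agent never gets two vertical hits of different directions, so the heterogeneous-vertical rows and condition $C$ are not triggered.

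The main obstacle I anticipate is not conceptual but one of bookkeeping: verifying rigorously that $P_1,P_2$ are the only intersection nodes and the only possible foreign nodes, and then tracking the sign combinations of $(a,b)$ across the four hit-patterns so that each lands on the intended table row. The guiding phenomenon that makes every case come out right is the built-in asymmetry that $B$ hits horizontally exactly where $R$ would hit vertically and vice versa, together with the opposite directions forced on the two agents whenever each makes a single hit.
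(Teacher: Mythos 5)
Your proposal is correct and follows essentially the same route as the paper: your $P_1=(a,0)$ and $P_2=(0,b)$ are exactly the paper's two intersection nodes $x$ and $y$ (the diagonal of the rectangle formed with the two bases), your observation that $B$ can only hit horizontally at $P_1$ and vertically at $P_2$ while $R$ does the opposite is the paper's first Claim (both hits made by one agent iff one is horizontal and one vertical), and your sign analysis yielding opposite directions in the one-hit-each case is the paper's second Claim, after which both arguments finish by the same reading of the decision table. Your explicit coordinate and outward-leg bookkeeping, and the handling of simultaneous first visits at $P_i$, merely make explicit what the paper's geometric rectangle argument leaves implicit.
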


\begin{proof}
If the bases of both agents are not on the same line,
only two hits can happen because the trajectories of the agents have only two common nodes.
Either both hits are made by the same agent or each agent makes exactly one hit.
We will consider both cases separately.
We start with the following claim.

\begin{claim}
	\label{lem-hori-vert}
Both hits are made by the same agent if and only if one hit is horizontal and the other one is vertical.
\end{claim}

In order to prove the claim,
let $x$ and $y$ be the common nodes of the trajectories of both agents. The ``only if'' direction of the equivalence follows from the fact that the base of the agent that makes the two hits is on the same vertical line as one of the nodes $x$ and $y$, and on the same horizontal line as the other (cf. Fig. \ref{fig:diagonal}).
	
	\begin{figure}[!h]
		\centering
		\includegraphics[scale=1]{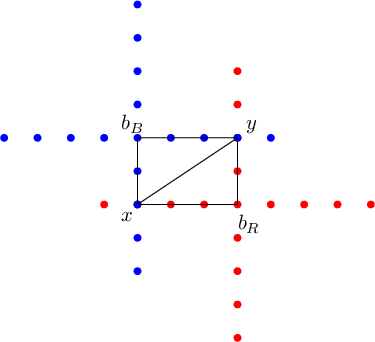}
		\caption{Trajectories of both agents after completing procedure $\cross(4)$. $x$ and $y$ are the nodes where the two hits are made. The bases, along with $x$ and $y$, form a rectangle. Both $x$ and $y$ are on the same diagonal.}
		\label{fig:diagonal}
	\end{figure}

Next, we prove the ``if '' part of the equivalence:  i.e, the implication that if one hit is horizontal and the other one is vertical, then both hits have to be made by the same agent.
Observe that the bases of both agents along with $x$ and $y$ form a rectangle, in which $x$ and $y$ are on one diagonal of the rectangle and both bases are on the other diagonal.
Without loss of generality, we assign $x$ to the corner of the rectangle that shares the same vertical line with the base of agent $B$. Therefore, $x$ and the base of $R$ are on the same horizontal line.
Hence $y$ shares the same vertical line with the base of agent $R$ and the same horizontal line with the base of $B$.
If a vertical hit happens at $x$ and a horizontal hit happens at $y$, then both $x$ and $y$ are domestic nodes of $R$. Therefore, both hits are made by $B$.
If a horizontal hit happens at $x$ and a vertical hit happens at $y$, then both $x$ and $y$ are domestic nodes of $B$.
Therefore, both hits are made by $R$.
In either case, both hits are made by the same agent.
This proves the claim. $\diamond$

\noindent
{\bf Case 1.} Both hits are made by the same agent.

In view of Claim \ref{lem-hori-vert}, the agent that made both hits has a vertical hit and a horizontal hit in its input. 
Therefore, in view of the decision table, it chooses action $II$.
The other agent did not make any hits, and hence it chooses action $I$. Both agents choose different actions.

\noindent
{\bf Case 2.} Each agent makes exactly one hit.

\begin{claim}
	\label{lem-hit-exact-one-each}
If each agent makes exactly one hit, then either one hit is an $E$-hit and the other one is a $W$-hit or one hit is an $N$-hit and the other is an $S$-hit.
\end{claim}

To prove the claim, observe that
	in view of Claim \ref{lem-hori-vert}, if each agent makes exactly one hit, then either both hits are horizontal or both hits are vertical.
	Both hits cannot be homogeneous because  they are made at nodes $x$ and $y$ which are on the diagonal of the rectangle formed by them and by the bases of the agents (cf. Fig. \ref{fig:diagonal}). This proves the claim. $\diamond$

	Consider the two possibilities following from Claim \ref{lem-hit-exact-one-each}. In view of the decision table,  the agent that makes an $E$-hit or an $S$-hit chooses action II in the second part of the algorithm, and the agent that makes a $W$-hit or an $N$-hit chooses action I.
Both agents choose different actions.
\end{proof}

\subsubsection{Both bases are on the same horizontal line}

\begin{theorem}\label{th:horizontal}
	If the bases of both agents are on the same horizontal line, then both agents choose different actions in the second part of Algorithm {\tt Known Upper Bound}.
\end{theorem}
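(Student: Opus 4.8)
The plan is to fix coordinates and reduce the statement to a comparison of the \emph{directions} of the two agents' first hits. Without loss of generality assume $B$ is (weakly) west of $R$; put $B$'s base at the origin and $R$'s base at $(d,0)$ with $0<d\le D$. The only common nodes of the two trajectories lie on the shared horizontal line: the two horizontal arms overlap in the segment $\{(k,0):d-D\le k\le D\}$, while the two vertical arms sit on the distinct vertical lines $x=0$ and $x=d$ and each meets the other agent's trajectory only at a point of that horizontal segment (namely $(d,0)$ and $(0,0)$). Consequently every hit of either agent is made at a node of this segment, entered while moving East or West, so each agent's recorded input contains only horizontal hits. By the decision table (Table~\ref{tab-decision-1}) the chosen action is then determined solely by the direction of the \emph{first} hit. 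It therefore suffices to show that, whenever both agents complete their first execution of $\cross(D)$ without meeting, one makes a $W$-hit first and the other makes an $E$-hit first (and neither lands in the ``No hits'' row).

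Next I would record the first-visit schedule forced by $\cross(D)$. Counting rounds from an agent's own wake-up, each agent traverses its west arm during rounds $1,\dots,D$ and its east arm only during rounds $4D+1,\dots,5D$. Thus $B$ first visits $(k,0)$ at relative round $-k$ when $k<0$ and at relative round $4D+k$ when $k>0$, while $R$ first visits $(k,0)$ at relative round $d-k$ when $k<d$ and at relative round $4D+(k-d)$ when $k>d$; both bases $(0,0),(d,0)$ are marked from the start. Let $\delta$ be $R$'s wake-up round minus $B$'s. The decisive structural observation is that both agents move West first, but $R$'s west arm runs \emph{through} the common segment (it passes $B$'s base $(0,0)$ and beyond), whereas $B$'s west arm runs \emph{away} from the contested region.

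I would then treat the two first hits separately. For $R$: its west arm is executed before anything else and necessarily reaches a foreign node no later than $B$'s base $(0,0)$, which is marked from the start; the first foreign node met on this arm is entered moving West, so $R$'s first hit is a $W$-hit and $R$ chooses Action $I$. The only exception is that $R$ may find $B$ still present at $(0,0)$, which happens exactly when $R$ arrives there (at relative round $d$) no later than $B$'s wake-up, i.e.\ when $\delta\le -d$; then rendezvous occurs at $B$'s base before $B$'s wake-up and there is nothing to prove. For $B$: when $\delta>-d$ every node of its west arm is reached by $B$ strictly before $R$, since $B$ reaches $(k,0)$ ($k<0$) at round $-k$ while $R$ reaches it at round $\delta+d-k$, and $\delta+d-k>-k$ precisely when $\delta>-d$; hence those nodes are domestic and $B$ makes no $W$-hit. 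Its first hit therefore lies on its east arm and is an $E$-hit, made at the latest at $R$'s base $(d,0)$, so $B$ chooses Action $II$. The symmetric exception is $\delta\ge 4D+d$, where $B$ reaches the still-sleeping base of $R$ first and rendezvous occurs. Collecting the cases: for $\delta\le -d$ or $\delta\ge 4D+d$ rendezvous happens during one agent's sleep, and for $-d<\delta<4D+d$ agent $R$ makes a $W$-hit first (Action $I$) while $B$ makes an $E$-hit first (Action $II$), so the actions differ.

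The main obstacle is exactly the arbitrary wake-up delay $\delta$: the first visitor, hence the owner, of each contested node on the shared segment depends on $\delta$, so nothing \emph{a priori} prevents $B$ from making an early $W$-hit that would wrongly duplicate $R$'s Action $I$. What makes the argument close is the coincidence of thresholds: the inequality $\delta<-d$ that would let $R$ mark a node of $B$'s west arm before $B$ does is the very inequality guaranteeing that $R$ reaches $B$'s sleeping base first, so every configuration that threatens the ``different actions'' property has in fact already produced a rendezvous. Verifying this matching of thresholds, checking that in the boundary regime no accidental simultaneous visit happens (a same-node/same-round coincidence on the two west arms would force $\delta=-d$, already a rendezvous), and confirming that neither agent falls into the ``No hits'' row, is the delicate part of the proof.
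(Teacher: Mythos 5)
Your proof is correct and follows essentially the same route as the paper's: it establishes exactly the paper's key claim that the western agent's first hit is an $E$-hit while the eastern agent's first hit is a $W$-hit (with extreme delays dismissed because rendezvous then occurs at a sleeping agent's base), and concludes via the decision table. The only difference is presentational: you make the wake-up delay thresholds $\delta\le -d$ and $\delta\ge 4D+d$ explicit with coordinates, where the paper argues qualitatively that neither agent can have reached the other's base by the later wake-up, and you also spell out the (implicit in the paper) fact that all hits here are horizontal.
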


\begin{proof}
	We may assume that rendezvous does not happen by the round in which the adversary wakes up the later agent. Otherwise, according to the definition of rendezvous time, the meeting occurs in time 0.
	Let $B$ denote the agent whose base is West of the base of the other agent and let $R$ denote the other agent. 
	Notice that each agent must make at least one hit.
	We start with the following claim.
	
	\begin{claim}
		\label{claim-horizontal}
		Agent $B$ always makes an $E$-hit as its first hit, while agent $R$ always makes a $W$-hit as its first hit.
	\end{claim}
	
	Recall that when each agent is woken up, it leaves immediately a mark at its base.
	Neither agent $R$ has arrived at the base of agent $B$,  by the round when agent $B$ is woken up, nor agent $B$ has arrived at the base of agent $R$, by the round when agent $R$ is woken up; otherwise, rendezvous happens by the round when the adversary wakes up the later agent.
	Let $t$ denote the round in which agent $B$ is woken up by the adversary.
	Since agent $R$ has not yet arrived at the base of agent $B$ by round $t$, all nodes that are West of the base of $B$ are unmarked by round $t$.
	Agent $B$ cannot make any hits as long as it is going West.
	Hence, no $W$-hits are made by $B$.
	On the other hand, when $B$ is going East, it makes at least one $E$-hit at the base of $R$.
	Therefore, the first hit made by $B$ is always an $E$-hit.
	Once agent $R$ is woken up by the adversary, it goes West before going in the other three directions. 
	When $R$ is going West, it makes at least one $W$-hit at the base of agent $B$.
	Therefore, agent $R$ always makes a $W$-hit as its first hit.
	This completes the proof of the claim. $\diamond$
	
	In view of the decision table, agent $B$ chooses action $II$, while agent $R$ chooses action $I$, in the second part of algorithm {\tt Known Upper Bound}.
	Hence each agent chooses a different action.
\end{proof}

\subsubsection{Both bases are on the same vertical line}

We now assume that the bases of both agents are on the same vertical line $L$.
As before, we want to show that both agents choose different actions.
As we will see, the argument in the present case is more involved than in the case when both bases are on the same horizontal line.
The reason is that the horizontal and vertical directions do not play the same role in our algorithm, due to the fact that procedure $\cross(D)$ starts by going West.

\begin{theorem}\label{th:vertical}
	If the bases of both agents are on the same vertical line, denoted as $L$, then both agents choose different actions in the second part of Algorithm {\tt Known Upper Bound}.
\end{theorem}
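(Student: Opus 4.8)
The plan is to show that, regardless of the adversarial delay between the wakeups, the agent with the more southern base always chooses action $I$ and the agent with the more northern base always chooses action $II$; since these are different, the theorem follows. Set up coordinates so that $L$ is the $y$-axis, let $B$ be the agent whose base is the more southern of the two, placed at $(0,0)$, and let $R$ be the other agent, with base $(0,g)$, $0<g\le D$. As in Theorem~\ref{th:horizontal}, I may assume the agents do not meet during the first executions of $\cross(D)$, for otherwise rendezvous has already occurred. The first thing to record is that, since the two horizontal arms of $\cross(D)$ lie on the distinct horizontal lines $y=0$ and $y=g$, and each agent crosses $L$ on a horizontal arm only at its own (domestic) base, every common node of the two trajectories lies on $L$ and is visited by each agent only while moving vertically. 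Concretely, the common nodes form the segment $\{(0,y):g-D\le y\le D\}$, and each agent first reaches such a node while going $N$ (if it is above that agent's base) or while going $S$ (if it is below it). In particular every hit in this case is vertical, so only the last four rows of the decision table can apply.

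Two hits are guaranteed regardless of timing, because both bases are marked from the outset: $B$ makes an $N$-hit when it first reaches $R$'s base $(0,g)$ going North, and $R$ makes an $S$-hit when it first reaches $B$'s base $(0,0)$ going South. Hence $B$ can never have the input ``only $S$-hits'' and $R$ can never have ``only $N$-hits''. The decisive asymmetry is the following: $B$'s northernmost visited node is $(0,D)$, which is itself a common node, whereas $R$'s northernmost visited node is $(0,g+D)$, which lies strictly above every common node and is therefore domestic for $R$ and never hit by it. Consequently condition $C$ is always false for $R$, since it never makes an $N$-hit at its northernmost node. Thus if $R$ has only $S$-hits it chooses $II$, and if $R$ is heterogeneous, its extra $N$-hit being necessarily at some $(0,y)$ with $y>g$, then $\neg C$ holds and it again chooses $II$. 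So $R$ chooses $II$ in every case.

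It remains to show that $B$ chooses $I$, i.e.\ that whenever $B$ is heterogeneous condition $C$ holds for it. Since $B$ always has an $N$-hit, it is either of type ``only $N$-hits'' (giving $I$) or heterogeneous, the latter requiring an $S$-hit at some $(0,y_0)$ with $y_0<0$; note this forces $g<D$, so that $R$'s $N$-arm does reach the line $y=D$. The core step, which I expect to be the main obstacle, is a timing comparison for arbitrary delay: an $S$-hit of $B$ at $(0,y_0)$ means $R$ marked $(0,y_0)$ first, and writing down the first-visit rounds of both agents on their $S$-arms shows this can happen only when $B$ wakes more than $g$ rounds after $R$; feeding the same wakeup gap into the first-visit rounds at $(0,D)$ then shows that $R$ reaches $(0,D)$ before $B$ as well, whence $B$ makes an $N$-hit at its northernmost node $(0,D)$, so condition $C$ holds and $B$ chooses $I$. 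The only genuine work is this bookkeeping of first-visit rounds, together with the boundary case $g=D$, where there are no common nodes below $B$'s base, so $B$ has only $N$-hits and trivially chooses $I$. Combining the two conclusions, $B$ and $R$ choose different actions.
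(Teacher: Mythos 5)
Your proof is correct, but it takes a genuinely different route from the paper's. The paper fixes the \emph{northern} agent as $B$, and splits into four cases according to whether $\up(b_B)$ and $\down(b_R)$ are domestic for one agent or the other; it then proves order-of-first-visit claims (that $top$ is domestic for the northern agent, $bottom$ for the southern one, and that on the segment between the bases all domestic nodes of the northern agent lie North of all domestic nodes of the southern one), handles each case via the decision table, and shows the fourth case is impossible. You instead exploit the exact schedule of $\cross(D)$: with bases $(0,0)$ and $(0,g)$ and wakeup rounds $\beta$ (southern) and $\rho$ (northern), first visits on the vertical arms occur at $\beta+2D+y$ (resp.\ $\beta+6D-y$) for the southern agent and $\rho+2D+(y-g)$ (resp.\ $\rho+6D+(g-y)$) for the northern one, so an $S$-hit of the southern agent at $(0,y_0)$, $y_0<0$, is equivalent to $\rho+6D+g-y_0<\beta+6D-y_0$, i.e.\ $\beta-\rho>g$, which in turn gives $\rho+3D-g<\beta+3D$, so the northern agent claims $(0,D)$ first and condition $C$ holds for the southern agent; I checked these inequalities and they are right. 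Your other key observation --- that $(0,g+D)$ is out of the southern agent's range, so condition $C$ is \emph{always} false for the northern agent --- is also correct (and matches the paper's reading of $C$ as referring to the agent's own North-most node, cf.\ its Case 2). What each approach buys: the paper's argument uses only qualitative ``who visited first'' reasoning and no arithmetic, so it is less tied to the precise timing of $\cross(D)$ and would survive perturbations of the procedure; yours yields a sharper invariant (the southern agent always chooses $I$ and the northern always $II$, with an exact characterization of when the southern agent can see an $S$-hit, namely delay exceeding $g$) and entirely avoids the four-way case analysis, including the separate impossibility argument the paper needs for its Case 4. The only soft spot is that you state the timing bookkeeping as an expectation rather than writing out the two inequalities; since they are three lines of arithmetic, you should include them, but nothing in the argument fails.
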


\begin{proof}
Recall that each agent marks its own base immediately after the adversary wakes it up.
Hence, the base of an agent is always a domestic node of this agent.
In this proof we will only consider nodes on line $L$.

Given a node $x$ of the grid, we denote by $\up(x)$ the Northern neighbor of $x$ and by $\down(x)$ the Southern neighbor of $x$.
Let $B$ denote the agent whose base is North of the base of the other agent and let $R$ denote the other agent.
We denote by $top$ (resp. $bottom$) the North-most (resp. South-most) node visited by any agent, by the round in which the later agent completes procedure $\cross(D)$.
Note that both nodes $bottom$ and $top$ are unique.
We use $b_{x}$ to represent the base of $x$, where $x\in \{B, R\}$.
There are four possibilities to consider.
\begin{itemize}
\item Node $\up(b_B)$ is domestic for $B$ and node $\down(b_R)$ is domestic for $R$.
\item Node $\up(b_B)$ is domestic for $B$ and node $\down(b_R)$ is domestic for $B$.
\item Node $\up(b_B)$ is domestic for $R$ and node $\down(b_R)$ is domestic for $R$.
\item Node $\up(b_B)$ is domestic for $R$ and node $\down(b_R)$ is domestic for $B$.
\end{itemize}
Our goal is to show that for each of these four potential possibilities, either it can never occur or both agents choose different actions.
We start with the following claim.

\begin{claim}
	\label{fact-marking}
	i) Node $top$ is domestic for $B$; ii) node $bottom$ is domestic for $R$; iii) on the segment between $b_{B}$ and $b_{R}$, all domestic nodes of $B$ are North of all domestic nodes of $R$.
	\end{claim}
	
We only prove statement (iii), since (i) and (ii) are obvious.
Assume that there exists a pair of nodes $x$ and $y$ on the segment between $b_B$ and $b_R$ such that $x$ is domestic for $B$, $y$ is domestic for $R$ and $x$ is South of $y$.
Since $x$ is domestic for $B$, we know that $B$ visits $x$ for the first time before $R$ does.
Let $t$ denote the round in which $B$ visits $x$ for the first time.
Since $y$ is between $b_B$ and $x$, agent $B$ arrives at $y$ before round $t$.
Note that $x$ is North of $b_R$. 
By round $t$, all domestic nodes of $R$ are South of $x$, since $R$ has not arrived at $x$ by round $t$.
Therefore, agent $B$ visits $y$ before $R$ does, and node $y$ cannot be a domestic node of $R$.
This is a contradiction which proves statement (iii). $\diamond$

	In Claim \ref{fact-marking}, we discussed nodes in the segment between $b_B$ and $b_R$, which is part of the segment between $top$ and $bottom$.
	Indeed, the latter segment can be divided into three sub-segments, i.e., the segment between $top$ and $b_{B}$, the segment between $b_{B}$ and $b_{R}$, and the segment between $b_{R}$ and $bottom$.
	Next, we show the properties of the other two sub-segments.
	
	\begin{claim}
		\label{fact-homo}
If  $\up(b_B)$ is a domestic node of $B$, then all nodes between $top$ and $b_B$ are domestic nodes of $B$.
If $\down(b_R)$ is a domestic node of $R$, then all nodes between $b_R$ and $bottom$ are domestic nodes of $R$.
	\end{claim}

We only prove the first statement, as the proof of the second is similar.	
Since node $\up(b_B)$ is a domestic node of $B$, agent $R$ has not arrived at $\up(b_B)$ when $B$ arrives there for the first time.
By the round in which $B$ arrives at $\up(b_B)$ for the first time, none of the nodes North of $\up(b_B)$ has been marked.
So $B$ marks these nodes when it visits each of them for the first time.
Therefore, all nodes between $b_{B}$ and $top$ are domestic nodes for $B$.
This proves the claim. $\diamond$

	Now, we are ready to prove that agents $B$ and $R$ always choose different actions in any of the four above mentioned possibilities that may occur.	
	
	
	\noindent
	{\bf Case 1. $\up(b_B)$ is a domestic node of $B$ and  $\down(b_R)$ is a domestic node of $R$.}
	
In view of Claim \ref{fact-homo}, all nodes between $b_B$ and $top$ are domestic nodes of $B$, while all nodes between $b_R$ and $bottom$ are domestic nodes of $R$.
	Furthermore, on the segment between $b_B$ and $b_R$, all domestic node of $B$ are North of all domestic nodes of $R$, in view of part (iii) of  Claim \ref{fact-marking}.
	Therefore, on the segment between $top$ and $bottom$, all domestic nodes of $B$ are North of all domestic nodes of $R$.
	Hence, agent $B$ makes only $S$-hits, while agent $R$ makes only $N$-hits.
	Note each agent makes at least one hit, at the base of the other agent.
	Therefore, in view of the decision table, agent $B$ chooses action II, and agent $R$ chooses action I. 
	
	\noindent
	{\bf Case 2. $\up(b_B)$ is a domestic node of $B$ and $\down(b_R)$ is a domestic node of $B$.}

Since $\up(b_B)$ is a domestic node of $B$, so are the nodes between $b_B$ and $top$, in view of Claim \ref{fact-homo}.
So $B$ cannot make any $N$-hits.
Agent $B$ makes at least one $S$-hit at $b_R$, and thus in view of the decision table, it chooses
action II.

Recall that agent $R$ makes at least one $N$-hit at $b_B$.
	Let $t$ denote the earliest round in which $R$ makes an $N$-hit, and let $u$ denote the node where this first $N$-hit was made.
	It follows that $u$ is not North of $b_B$.
	
	\begin{claim}\label{North of u}
	All nodes that are North of $u$ are domestic for $B$.
	\end{claim}
	
	In order to prove the claim, notice that
	since $u$ is domestic for $B$, agent $B$ visits $u$ for the first time before $R$ does.
	If $u$ is South of $b_B$, then in the round when $B$ visits $u$ for the first time, $B$ was going South.
	In this case, by round $t$, in which $R$ was going North and arrived at $u$ for the first time, all nodes that are North of $u$ have already been visited by $B$, and thus they are domestic for $B$.
	Therefore, if $u$ is South of $b_B$, the claim is true.
	Otherwise, $u=b_B$.
	Note that $\up(b_B)$ is a domestic node of $B$.
	Hence, all nodes between $u$ and $top$ are domestic for $B$, in view of Claim \ref{fact-homo}.
	This proves the claim. $\diamond$

	In view of Claim \ref{North of u}, when $R$ visits any of the nodes North of $u$ for the first time, including the North-most node $w$ visited by $R$,  an $N$-hit is made.
	Therefore, condition $C$ in the decision table applies.
	After completing the execution of $\cross(D)$, $R$ has seen $N$-hits first and then at least one $S$-hit at $\down(b_R)$.
	In particular, among those $N$-hits, the last one was made at node $w$.
	Therefore, in view of the decision table, $R$ chooses action $I$, upon completion of procedure $\cross(D)$.
	Hence the agents choose different actions.

	\noindent
	{\bf Case 3. $\up(b_B)$ is a domestic node of $R$ and $\down(b_R)$ is a domestic node of $R$.}
	
	Since node $\down(b_R)$ is domestic for $R$, so are the nodes between $b_R$ and $bottom$, in view of Claim \ref{fact-homo}.
	So, agent $R$ cannot make any $S$-hits.
	As we know, $R$ makes at least one $N$-hit at $b_B$. Hence, agent $R$ chooses action I, in view of the decision table.
	
	Since $\up(b_B)$ is domestic for $R$, an $N$-hit is made by $B$ when $B$ arrives at $\up(b_B)$ for the first time.
	In view of part i) of Claim \ref{fact-marking}, node $top$, as the North-most node ever visited by $B$, is domestic for $B$.
	Agent $B$ cannot make a hit at $top$, so condition $\neg C$ applies.
	
	\begin{claim}
	Node $\down(b_B)$ is domestic for $R$.
	\end{claim}
	
	In order to prove the claim, notice that,
	before agent $B$ starts moving vertically, agent $R$ has arrived at $\up(b_B)$.
	Note that agent $R$ arrives at $\down(b_B)$ for the first time before arriving at $\up(b_B)$ for the first time.
	We know that agent $R$ arrives at $\down(b_B)$ for the first time before agent $B$ does.
	Therefore, $\down(b_B)$ is domestic for $R$. This proves the claim.$\diamond$

	Upon completion of $\cross(D)$, agent $B$ has made at least one $N$-hit at $\up(b_B)$, was the only one to visit $top$, and then made at least one $S$-hit at $\down(b_B)$ because $\down(b_B)$ is domestic for $R$.
	In view of the decision table, agent $B$ chooses action II.
	Therefore, both agents choose different actions.

	\noindent
	{\bf Case 4. $\up(b_B)$ is a domestic node of $R$ and $\down(b_R)$ is a domestic node of $B$.} 
	
	It turns out that this case can never happen. As we have seen in the previous case, when agent $R$ arrives at $\up(b_B)$ for the first time, agent $B$ has not yet started moving vertically.
	Hence agent $R$ was woken up earlier than agent $B$, so agent $R$ starts moving South before agent $B$ does.
	Note that $\down(b_R)$ is the first node that agent $R$ visits after it starts moving South.
	Therefore, agent $R$ arrives at $\down(b_R)$ for the first time before agent $B$ does.
	It follows that node $\down(b_R)$ cannot be domestic for $B$, which is a contradiction.
	
	As we have seen, in all possible cases when the bases of both agents are on the same vertical line, the agents choose different actions in the second part of the algorithm. 
	This completes the proof.
\end{proof}

Theorems \ref{th:different lines}, \ref{th:horizontal} and \ref{th:vertical} imply that, for all possible relative positions of the bases of the agents, the agents choose different actions in the second part of Algorithm {\tt Known Upper Bound}. As noted in the beginning of this section, this implies rendezvous.

It remains to analyze the complexity of Algorithm {\tt Known Upper Bound}.

\begin{lemma}\label{compl Known Upper Bound}
The execution time of Algorithm {\tt Known Upper Bound} is in $O(D)$, if the agents start at two nodes at a distance at most $D$.
\end{lemma}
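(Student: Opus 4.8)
The plan is to bound the number of rounds between the wakeup of the later agent and rendezvous by a fixed constant times $D$, exploiting the structure already isolated in the correctness proof. First I would record three intrinsic facts about procedure $\cross(D)$: a single execution lasts exactly $8D$ rounds; its trajectory is the ``cross'' consisting of the horizontal and vertical arms of length $D$ centered at the agent's base; and consequently every node it visits lies at distance at most $D$ from the base. By Theorems \ref{th:different lines}, \ref{th:horizontal} and \ref{th:vertical}, exactly one agent, call it $a_{II}$, performs Action II and the other, $a_I$, performs Action I. Agent $a_{II}$ therefore halts forever at the earliest foreign node $u$ it recorded, and, since a foreign node of $a_{II}$ must have been marked by $a_I$, the node $u$ lies on the cross that $a_I$ retraverses perpetually (Action I). The whole bound will be assembled from two local quantities: (a) how long after its own wakeup $a_{II}$ settles at $u$, and (b) how long after $a_{II}$ settles $a_I$ next reaches $u$; I then translate both into time measured from the later wakeup.

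For (a), I would observe that $a_{II}$ spends $8D$ rounds completing its first $\cross(D)$, and then walks from its base to $u$, a node at distance at most $D$, in at most $D$ rounds; hence $a_{II}$ settles at $u$ within $9D$ rounds of its own wakeup. Crucially, this duration is purely intrinsic: it depends only on $a_{II}$'s own clock, not on the adversary's delay. For (b), I would note that $a_I$, while executing Action I, visits every node of its (fixed) cross---in particular $u$---within each $8D$-round execution of $\cross(D)$. If $a_{II}$ settles while $a_I$ is midway through an execution, $a_I$ finishes it in at most $8D$ rounds and then reaches $u$ during the following complete execution, so $a_I$ is at $u$ within $16D$ rounds of $a_{II}$'s settling, realizing rendezvous.

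To combine, let $t_E \le t_L$ denote the wakeup rounds of the earlier and later agents, with time measured from $t_L$. If $a_{II}$ is the later agent, then by (a) it settles by $t_L + 9D$, and since $a_I$ has been looping on its cross since $t_E \le t_L$, step (b) yields rendezvous by round $t_L + 9D + 16D = t_L + 25D$. If instead $a_{II}$ is the earlier agent, then by (a) it settles by $t_E + 9D \le t_L + 9D$, using $t_E \le t_L$. Either this settling round is at most $t_L$, in which case $u$ is already occupied when $a_I$ wakes and $a_I$'s first $\cross(D)$ (rounds $t_L$ to $t_L+8D$) meets $a_{II}$ at $u$; or the settling falls in $(t_L, t_L+9D]$, and since $a_I$ traverses its cross throughout $[t_L,\infty)$, step (b) again gives rendezvous within $9D + 16D = 25D$ rounds of $t_L$. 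In every case rendezvous occurs within $25D = O(D)$ rounds of the later wakeup.

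I expect the only real obstacle to be the bookkeeping forced by the arbitrary delay, since a priori the earlier agent may have been looping for an unboundedly long time before the later one wakes. The resolution, which I would make explicit, is that neither ingredient is affected by the delay: $a_{II}$'s settling time is governed solely by its own execution of $\cross(D)$ plus a walk of length at most $D$, and $a_I$ never stops retraversing one and the same cross through $u$, so it is perpetually ``ready'' to complete rendezvous within $O(D)$ rounds of the moment $a_{II}$ settles. The inequality $t_E \le t_L$ is exactly what is needed to convert ``within $9D$ of $a_{II}$'s own wakeup'' into ``within $9D$ of $t_L$,'' closing the argument.
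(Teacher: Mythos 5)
Your proof is correct and follows essentially the same decomposition as the paper's: $8D$ rounds for the first execution of $\cross(D)$, $O(D)$ rounds for $a_{II}$ to settle at the foreign node $u$ on $a_I$'s cross, and $O(D)$ more rounds for the perpetually cycling $a_I$ to reach $u$. The only differences are cosmetic: the paper obtains the sharper total $18D$ (arguing that $a_I$ revisits every node of its cross within any $8D$-round window, and allowing $2D$ for the settling move), while you use the cruder ``finish the current execution plus one full execution'' bound of $16D$ and make the bookkeeping for the wake-up delay more explicit via the inequality $t_E \le t_L$, which the paper handles implicitly by starting the count at the later agent's wake-up round.
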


\begin{proof}
According to the definition, counting time starts in the wake-up round of the later agent. Call this round 0.
Let $a_I$ denote the agent that chooses action $I$, and let $a_{II}$ denote the agent that chooses action $II$.
Procedure $\cross(D)$ uses $8D$ rounds.
Therefore, by the end of round $8D$, both agents have decided which action they should take in the second part of the algorithm.
Within the next $2D$ rounds, agent  $a_{II}$
arrives at one of the nodes on the trajectory of agent $a_{I}$, and stops. Within another $8D$ rounds, agent $a_{I}$
that repeats procedure $\cross(D)$ finds agent $a_{II}$ on its trajectory.
Hence, the algorithm uses no more $8D+2D+8D = 18D$ rounds, which is in $O(D)$.
\end{proof}

Theorems  \ref{th:different lines}, \ref{th:horizontal}, \ref{th:vertical} and Lemma \ref{compl Known Upper Bound} imply the following corollary.

\begin{corollary}
Suppose that both agents know the same upper bound $D$ on their initial distance. Then Algorithm {\tt Known Upper Bound} guarantees rendezvous in time $O(D)$.
\end{corollary}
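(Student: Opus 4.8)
The plan is to assemble the corollary directly from the three structural theorems and the complexity lemma already proved, since the algorithm is fixed and all that remains is to certify that it terminates correctly and quickly for every adversarial placement of the bases. First I would note that the relative positions of the two bases fall into exactly three mutually exclusive and exhaustive cases: the bases lie on no common line, on a common horizontal line, or on a common vertical line. This trichotomy is clean because two distinct points of $\mathbb{Z}\times\mathbb{Z}$ either share their first coordinate, share their second coordinate, or neither (they cannot share both without coinciding, in which case rendezvous is immediate), and it is precisely this partition that licenses the case analysis.

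Next I would invoke Theorem \ref{th:different lines}, Theorem \ref{th:horizontal}, and Theorem \ref{th:vertical} in turn. Each asserts, for its own case, that the two agents select different actions in the second part of Algorithm {\tt Known Upper Bound}. Combining them across the trichotomy yields the uniform conclusion that, regardless of how the adversary places the bases at distance at most $D$, one agent chooses action $I$ and the other chooses action $II$. With this in hand, the third step converts ``different actions'' into ``rendezvous'' via the high-level argument stated at the opening of the correctness subsection: letting $a_{II}$ be the agent performing action $II$, it walks to the earliest foreign node $u$ it recorded and halts there permanently; since $u$ was foreign for $a_{II}$, it lies on the trajectory traced by $a_I$ during its execution of $\cross(D)$, so $a_I$, which under action $I$ repeats $\cross(D)$ indefinitely, revisits $u$ within at most one further execution and meets the stationary $a_{II}$ there.

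Finally I would bound the running time by quoting Lemma \ref{compl Known Upper Bound}, counting from the wakeup of the later agent: one execution of $\cross(D)$ takes $8D$ rounds, moving $a_{II}$ to $u$ costs at most $2D$ rounds, and one more $\cross(D)$ by $a_I$ costs $8D$ rounds, for a total of at most $18D$, which is $O(D)$. The real work is carried by the three theorems establishing the ``different actions'' guarantee uniformly over all base placements; by contrast, the rendezvous-from-different-actions implication and the time bound are essentially bookkeeping. The one subtlety I would flag is that time is measured from the later wakeup, so the arbitrary start delay permitted in this scenario never enters the count, and the additive $O(D)$ terms above are genuinely all that accrue after that reference round.
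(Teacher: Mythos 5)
Your proposal is correct and follows essentially the same route as the paper, which likewise derives the corollary by combining Theorems \ref{th:different lines}, \ref{th:horizontal}, and \ref{th:vertical} (covering the exhaustive trichotomy of base placements) with the rendezvous-from-different-actions argument stated at the start of Section \ref{cor-cor-bounded} and the $18D = O(D)$ time bound of Lemma \ref{compl Known Upper Bound}. Nothing is missing; your flagged subtlety about counting time from the later agent's wakeup matches the paper's convention exactly.
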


\section{Simultaneous Start}

In this section, we consider the second scenario of the rendezvous problem, in which both agents are simultaneously woken up by the adversary but they do not have any {\em a priori} knowledge.
The adversary places both agents at two nodes of the grid at a distance at most $D$.
We will design  a deterministic algorithm that guarantees rendezvous in time $O(D)$.
The algorithm is presented in Section \ref{sect-alg-sim}, and the proof of its correctness and complexity is deferred to Section \ref{sect-sim-proof}.

\subsection{The Algorithm}
\label{sect-alg-sim}

Algorithm {\tt Simultaneous Start} works in phases numbered $0,1, \dots$ performed until rendezvous is achieved.
In the beginning of the $i$-th phase, where $i\ge 0$, each agent chooses one of the two following actions: 
\begin{itemize}
	\item Action $I$: Perform procedure $\cross(2^i)$.
	\item Action $II$: Move to the earliest visited foreign node and stay there forever (in this and all future phases).
%
%
\end{itemize}
In each phase, if an agent chooses action I, it collects the same information as in Algorithm {\tt Known Upper Bound}:
the nodes where it makes hits and the directions of the moves when these hits occur.

The decision which of these actions has to be chosen in a given phase $i$ is presented in Table \ref{tab-decision-sim}. The first column records all types of information that an agent could collect in phase $i-1$, if it executed action $I$ in this phase. This information is used as input to determine the action chosen in phase $i$. The second column shows the actions to be taken in phase $i$ that correspond to each possible input.
Note that all inputs listed in the table are pairwise exclusive, hence the choice of the action is unambiguous.

\begin{table}[!h]
	\begin{center}
		\begin{tabular}{ |c|c| } 
			\hline
			Input & Action		\\
			\hline
			\hline
			No hits & $I$  \\ 
			\hline
			Exactly two hits, with
			a horizontal one and a vertical one & $II$  \\ 
			\hline
			Only $S$-hits (an arbitrary positive number of them) & $II$  \\ 
			\hline
			Only $N$-hits (an arbitrary positive number of them) & $I$  \\ 
			\hline
			Only $E$-hits (an arbitrary positive number of them) & $II$\\
			\hline
			Only $W$-hits (an arbitrary positive number of them)  & $I$\\
			\hline
		\end{tabular}
		\caption{\label{tab-decision-sim}The decision table for Algorithm {\tt Simultaneous Start} }
	\end{center}
\end{table}

\subsection{Correctness and complexity}
\label{sect-sim-proof}
The high-level idea of the proof of correctness of Algorithm {\tt Simultaneous Start}  is to show that each agent will get, in the beginning of each phase, one of the six types of inputs enumerated in the decision table of the algorithm, and that by phase $\lceil \log D \rceil+1$, the input of each agent corresponds to a different action.
This is enough to guarantee that agents will meet.
Indeed, let $p$ denote the earliest phase in which both agents choose different actions, and let $a_I$ and $a_{II}$ denote the agents that perform actions $I$ and $II$, respectively, in phase $p$.
In phase $p$, agent $a_{II}$ moves, after at most $2^p$ rounds, to a node $u$ in the trajectory of $a_I$, performed in the execution of procedure $\cross(2^{p})$, and stays there forever. Agent 
$a_I$ executes procedure $\cross(2^{p})$ in phase $p$, and makes its second visit at node $u$ after more than $2^p$ rounds.
Hence, both agents meet at node $u$
by the end of phase~$p$.

Observe that there are two possible types of relative positions of the bases of the agents: either both bases are on the same line or not. 
We prove that $p\leq \lceil \log D \rceil+1$, in each of these cases.

\begin{theorem}\label{dist-line}
If the bases of the agents are at a distance $d$ and they are on the same line then agents meet at the latest in phase $ \lceil \log d\rceil+1$.
\end{theorem}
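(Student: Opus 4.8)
The plan is to reduce the statement to showing that, as long as the agents have not yet met, they choose different actions in some phase $p \le \lceil \log d\rceil +1$; the high-level argument given just before the theorem then yields a meeting by the end of that phase. Since the two bases lie on a common line, I treat the horizontal and the vertical case separately, but the decision table is symmetric enough that the two are proved by the same method, so I describe the horizontal one in detail and indicate the change for the vertical one. Call $B$ the agent whose base $b_B$ is West and $R$ the one whose base $b_R=b_B+(d,0)$ is East, and set $i_0=\lceil\log d\rceil$, the smallest index with $2^{i_0}\ge d$.

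The central tool is the observation that, since both agents run the same algorithm from a simultaneous wake-up, $R$'s whole trajectory is exactly $B$'s trajectory translated by $\delta=b_R-b_B$. Writing $f(t)$ for $B$'s displacement from $b_B$ at round $t$ (so $B$ is at $b_B+f(t)$ and $R$ at $b_B+f(t)+\delta$), the event ``$R$ has marked the node $b_B+f(t)$ before round $t$'' is equivalent to ``$B$ visited $b_B+f(t)-\delta$ at some round $<t$''. Hence $B$ makes a hit at a first-visit round $t$ iff $f(t)-f(t')=\delta$ for some $t'<t$, and symmetrically $R$ hits at $s$ iff $f(s')-f(s)=\delta$ for some $s'<s$. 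Because $f$ takes values only on the four coordinate half-axes and $\delta=(d,0)$, these conditions can be checked arm by arm of $\cross$. The first fact I would extract is the structural claim that, in every phase, $B$ can make only $E$-hits and $R$ can make only $W$-hits: the value $f(t')=f(t)-\delta$ that a West-bound first visit would require lies strictly further West than $B$'s current westernmost displacement and so was never reached earlier (no $W$-hit for $B$); the shifted displacement required by a North- or South-bound first visit is off the axes and so was never visited (no vertical hits); and the mirror accounting rules out $E$-hits for $R$. Here the order of the arms in $\cross$ (West first, East third) is exactly what makes the bookkeeping go through, since the West-arm and base marks are already present when the East arm is traversed but not conversely.

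Next I would pin down when the hits begin. Because $b_R$ is marked by $R$ from round $0$ and $B$'s East arm reaches $b_R$ for the first time precisely in phase $i_0$ (it attains distance $2^{i_0}\ge d$, whereas it reached only $2^{i_0-1}<d$ before), agent $B$ makes an $E$-hit at $b_R$ no later than phase $i_0$; let $q\le i_0$ be the first phase in which $B$ makes any hit. Combining this with the structural claim, $B$'s recorded hits are none in phases $0,\dots,q-1$ and all $E$-hits from phase $q$ on, while $R$'s recorded hits are, in each phase, either none or all $W$-hits. Reading the decision table, $B$ chooses action $I$ through phase $q$ and then action $II$ from phase $q+1$, whereas $R$ chooses action $I$ in every phase (both ``no hits'' and ``only $W$-hits'' map to $I$). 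Thus the agents first differ in phase $q+1\le i_0+1=\lceil\log d\rceil+1$, and by the high-level argument they meet by the end of that phase.

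For the vertical case I let $B$ be the North agent and $R$ the South one, so $\delta=(0,-d)$, and run the identical argument: now $B$ can make only $S$-hits and $R$ only $N$-hits, $B$ reaches $b_R$ moving South for the first time by phase $i_0$, and the table sends ``only $S$-hits'' to $II$ while both ``only $N$-hits'' and ``no hits'' map to $I$. I expect the main obstacle to be precisely the arm-by-arm verification of the two structural facts — that the stopping agent makes hits of a single homogeneous type and the other agent makes hits only of the complementary type — carried out across the nesting of successive crosses and the simultaneous marking by both agents. The translation identity is what keeps this analysis on the single trajectory $f$ and prevents the timing subtleties (which marks are already in place when a given arm is traversed, within a phase and across earlier phases) from proliferating.
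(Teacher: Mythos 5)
Your proposal is correct and takes essentially the same route as the paper: a structural claim that $B$ makes only $E$-hits and $R$ only $W$-hits (the paper proves this by observing that nodes West of $b_B$ are never domestic for $R$ --- your translation identity $f(t)-f(t')=\delta$ in set-theoretic form), the observation that $B$ first visits $b_R$, necessarily a hit, in phase $\lceil \log d\rceil$ at the latest, and the decision-table readout forcing different actions in phase $q+1\le \lceil \log d\rceil+1$. Your write-up is in fact marginally more complete, since it explicitly rules out vertical hits (needed for the ``Only $E$-hits''/``Only $W$-hits'' table rows but left implicit in the paper's claim) and avoids the paper's ``both agents make hits in the same phase'' observation by noting that $R$ chooses action $I$ under either of its two possible inputs.
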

\begin{proof}
We give the proof in the case in which the bases of both agents are on the same horizontal line. The proof when they are on the same vertical line is similar. Let $B$ denote the agent whose base $b_B$ is West of the base $b_R$ of the other agent, and let $R$ denote the other agent.
We start with the following claim.

\begin{claim}
	\label{lem-hori-exlusive-W-E}\label{hor}
	Agent $B$ cannot make any $W$-hits and agent $R$ cannot make any $E$-hits.
\end{claim}
In order to prove the claim, note that
	each node that is West of $b_B$ is visited by $B$ before being visited by $R$, since $b_B$ is West of $b_R$.
	Therefore, no nodes that are West of $b_B$ are domestic for $R$.
	Similarly, each node that is East of $b_R$ is visited by $R$ before being visited by $B$.	
	Hence, no nodes that are East of $b_R$ are domestic for $B$. 
	As a result, neither $B$ makes any $W$-hits nor $R$ makes any $E$-hits.$\diamond$

	Observe that, both agents make hits in the same phase, and that
	by phase $\lceil \log d \rceil$, agent $B$ has visited the base $b_R$ of $R$, at which $B$ made an $E$-hit, and agent $R$ has visited the base $b_B$ of $B$, at which $R$ made a $W$-hit. Let $p'$  be the phase in which the first hit is ever made. Thus $p'\leq \lceil \log d \rceil$.
Let $p=p'+1$.
In phase $p$, agent $B$ chooses action $II$, while agent $R$ chooses action $I$, in view of  Claim \ref{hor} and of the decision table.
Hence agent $B$ goes to some node of the trajectory of $R$ at which $R$ meets it. 
\end{proof}

\begin{theorem}\label{dist-no-line}
If the bases of the agents are at a distance $d$ and they are not on the same line then agents meet at the latest in phase $ \lceil \log d\rceil+1$.\end{theorem}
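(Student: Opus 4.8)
The plan is to mirror the geometric/decision analysis already carried out for Theorem~\ref{th:different lines} and to add a phase-counting argument that locates the first phase in which the decisive hits are recorded. Place the bases at $b_B=(0,0)$ and $b_R=(a,c)$ with $a\neq 0$ and $c\neq 0$ (since they are not on the same line), and set $m=\max(|a|,|c|)$, so that $m\le |a|+|c|=d$. By phase $i$ the cumulative trajectory of each agent is the ``plus'' shape centred at its base with arms of length $2^i$ (the executions of $\cross(2^0),\dots,\cross(2^i)$ are nested). These two plus-shapes meet in at most the two nodes $x_1=(a,0)$ and $x_2=(0,c)$, and the bases together with $x_1,x_2$ form a rectangle whose diagonals are $\{b_B,b_R\}$ and $\{x_1,x_2\}$ --- exactly the configuration of Fig.~\ref{fig:diagonal}.

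First I would pin down the phase in which hits first appear. A hit can only be made at a foreign node, hence at a common node of the two trajectories, i.e.\ at $x_1$ or $x_2$. The node $x_1$ lies within reach of $B$ (along its horizontal arm) only from phase $\lceil\log|a|\rceil$ on, and within reach of $R$ (along its vertical arm) only from phase $\lceil\log|c|\rceil$ on; $x_2$ is symmetric. Hence both common nodes become reachable by both agents exactly from phase $p'=\lceil\log m\rceil$ on, and in every earlier phase at most the closer agent has visited a given common node, so that node is still domestic and no hit occurs. Thus the input fed to the phase-$i$ decision is ``no hits'' for all $i\le p'$, each agent chooses action~$I$, and the radius keeps doubling; the first hits are recorded during phase $p'$ (where $2^{p'}\ge m$ guarantees both $x_1$ and $x_2$ are traversed by both agents), and this is already the complete set of hits. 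Since $m\le d$ we have $p'=\lceil\log m\rceil\le\lceil\log d\rceil$.

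Next I would show that the input recorded in phase $p'$ forces the two agents into different actions in phase $p'+1$. As in Claim~\ref{lem-hori-vert}, both hits are made by the same agent if and only if one is horizontal and the other vertical; and, as in Claim~\ref{lem-hit-exact-one-each}, if each agent makes exactly one hit then the two hits are either one $E$-hit and one $W$-hit, or one $N$-hit and one $S$-hit (they cannot be homogeneous because $x_1,x_2$ lie on a common diagonal). These facts concern only the rectangle configuration and the fact that a hit is made by the second agent to reach a common node, so they transfer verbatim to the simultaneous-start setting. Feeding each possibility into Table~\ref{tab-decision-sim}: if one agent makes both hits it reads ``exactly two hits, one horizontal and one vertical'' and chooses $II$ while the other reads ``no hits'' and chooses $I$; if the hits split as one $E$ and one $W$, the $E$-hitter chooses $II$ and the $W$-hitter chooses $I$; and if they split as one $N$ and one $S$, the $S$-hitter chooses $II$ and the $N$-hitter chooses $I$. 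In every case the two agents choose different actions in phase $p'+1$. Note that the decision table needs only the \emph{unordered} pair of hit directions, so it is irrelevant which agent reaches a given common node second.

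Finally, as in the high-level argument preceding Theorem~\ref{dist-line}, once the actions diverge in phase $q=p'+1$ the agent choosing $II$ walks to its earliest foreign node (one of $x_1,x_2$, which lies on the trajectory of the agent choosing $I$ because $2^q\ge m$) and waits there, and the agent executing $\cross(2^q)$ passes through that node during the same phase; hence the agents meet in phase $q=p'+1\le\lceil\log d\rceil+1$. I expect the main obstacle to be the phase-counting bookkeeping rather than the decision analysis: when $\lceil\log|a|\rceil\neq\lceil\log|c|\rceil$ the two common nodes are reached by the two agents in different phases, and one must argue that nonetheless no hit is ever recorded before phase $p'$, that both hits are recorded by the end of phase $p'$, and hence that neither agent is tricked into diverging early on incomplete information. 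The residual delicacy --- the asymmetry of $\cross$, which makes within-phase arrival times differ --- does not affect the conclusion, precisely because the decision table depends only on the set of hit directions.
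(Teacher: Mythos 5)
Your proof is correct and follows essentially the same route as the paper: show that both hits at the two common nodes occur in one phase $p'\le\lceil\log d\rceil$, transfer Claims \ref{lem-hori-vert} and \ref{lem-hit-exact-one-each} to determine the possible hit-direction pairs, and apply Table \ref{tab-decision-sim} to force divergent actions in phase $p'+1$. The only real difference is cosmetic: where the paper proves its Claim \ref{lem-two-critical-points} by contradiction from the equal diagonal distances $dist(x,b_B)=dist(y,b_R)$, you pin the phase down constructively as $\lceil\log\max(|a|,|c|)\rceil$ via arm-reachability, which is the same geometric observation in explicit form.
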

\begin{proof}
We start with the following claim.

\begin{claim}
	\label{lem-two-critical-points}
	In any phase, either no hits are made or both hits are made. 
\end{claim}

We prove the claim by contradiction.
	Let $x$ and $y$ denote the two nodes where the two hits are made, such that $x$ and the base $b_B$ of $B$ share the same vertical line and $y$ and the base $b_R$ of $R$ share the same vertical line (cf. Fig. \ref{fig:diagonal}).
	Observe that nodes $x$, $y$, $b_B$ and $b_R$ form a rectangle.
	It follows that $dist(x, b_B)=dist(y, b_R)$ and $dist(x, b_R)=dist(y, b_B)$.
	
	Let $p'$ denote the phase in which the first hit was ever made.
	So, in all previous phases, no hits were made.
	Assume that in phase $p'$, only the hit at node $x$ was made.
	Following this assumption, we know that both agents visit $x$ in phase $p'$, while at least one agent has not visited node $y$.
	First, we consider the case in which agent $B$ has not visited $y$ in phase $p'$.
	Let $u$ denote the node that is closest to $y$, among all the nodes that are on the trajectory of $B$.
	Then, it follows that $2^{p'}=dist(b_B, u)<dist(b_B, y)=dist(b_R, x)$.
	On the other hand, since agent $R$ has visited node $x$ in phase $p'$, we have $dist(b_R, x)\le 2^{p'}$, which leads to a contradiction.
	Next, we consider the case in which agent $R$ has not visited $y$ in phase $p'$.
	Let $v$ denote the node that is closest to $y$, among all the nodes that are on the trajectory of $R$.
	Then, it follows that $2^{p'}=dist(b_R, v) < dist(b_R, y)=dist(b_B, x)$.
	On the other hand, since agent $B$ has visited $x$, we have $dist(b_B, x)\le 2^{p'}$.
	This leads to a contradiction.
		Hence, it is impossible to have only one hit made at node $x$ in phase $p'$.
			The assumption that only one hit at node $y$ was made in phase $p'$ can be refuted in a similar way.
		Therefore, both hits must be made in phase $p'$.
		Note that since the trajectories of the agents have only two common nodes, once two hits were made in phase $p'$, no hits can be made in the future phases. $\diamond$
	
		Let $p'$ denote the phase in which the first hit was ever made. Then $p'\leq  \lceil \log d\rceil$.
	In view of Claim \ref{lem-two-critical-points}, two hits are made in phase $p'$. Either both hits are made by the same agent or each agent makes exactly one hit.
	Let $p=p'+1$.
	If both hits are made by the same agent, then one hit is horizontal and the other one is vertical.
	In view of Table \ref{tab-decision-sim}, the agent that made two hits in phase $p'$ chooses action $II$ in phase $p$.
	The other agent made no hits in phase $p'$, so it chooses action $I$ in phase $p$.
	Therefore, the agents choose different actions in phase $p$.
	If each agent made only one hit, then either one hit is an $E$-hit and the other one is a $W$-hit, or one hit is an $N$-hit and the other is an $S$-hit.
	In either case, both agents choose different actions in phase $p$. Since 
	 $p\leq  \lceil \log d\rceil +1$, the proof is concluded.
\end{proof}

Theorems \ref{dist-line} and \ref{dist-no-line} imply the correctness of Algorithm {\tt Simultaneous Start}.
It remains to analyze its complexity.

\begin{theorem}\label{compl Simultaneous Start}
Suppose that the agents start simultaneously at a distance at most $D$. Then Algorithm {\tt Simultaneous Start} guarantees rendezvous in time $O(D)$.
\end{theorem}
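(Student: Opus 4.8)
The plan is to bound the running time by summing the durations of all phases executed up to and including the meeting phase, and then to observe that this geometric sum is dominated by its last term, which is $O(D)$.

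I would first fix the parameters via correctness. Since the agents are woken simultaneously, the counting of time starts at round $0$ for both, and their initial distance $d$ satisfies $d \le D$. By Theorems~\ref{dist-line} and~\ref{dist-no-line}, whether or not the two bases lie on a common line, the agents meet at the latest in some phase $p$ with $p \le \lceil \log d \rceil + 1 \le \lceil \log D \rceil + 1$. In particular $2^{p} \le 2^{\lceil \log D \rceil + 1} < 4D$.

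Next I would account for the rounds. As established in the correctness proofs, the decisive phase is $p = p'+1$, where $p'$ is the phase in which the first hit occurs; in every phase $0,1,\dots,p-1$ both agents still receive the ``no hits'' input and hence both perform action~$I$, i.e., a single execution of $\cross(2^{i})$, lasting $8\cdot 2^{i}$ rounds. Because the agents remain synchronized throughout these phases, it suffices to bound the time elapsed by the end of phase $p$. For phase $p$ itself I would reuse the timing argument given just before Theorem~\ref{dist-line}: the agent $a_{II}$ reaches the node $u$ on the trajectory of $a_{I}$ within $2^{p}$ rounds and halts there, while $a_{I}$, executing $\cross(2^{p})$, makes its second visit to $u$ before this phase ends, so the meeting is completed within the $8\cdot 2^{p}$ rounds of phase $p$. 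The total number of rounds is therefore at most
\[
\sum_{i=0}^{p} 8\cdot 2^{i} \;=\; 8\bigl(2^{p+1}-1\bigr) \;<\; 16\cdot 2^{p} \;<\; 64\,D,
\]
which is in $O(D)$.

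The estimate itself is an elementary geometric sum, so I expect no genuine technical difficulty. The one point deserving care—and the only place where the argument could break—is confirming that the meeting terminates within phase $p$ and does not spill into phase $p+1$; this is exactly what the comparison between the at-most-$2^{p}$ rounds that $a_{II}$ needs to reach $u$ and the strictly-more-than-$2^{p}$ rounds after which $a_{I}$ revisits $u$, spelled out before Theorem~\ref{dist-line}, provides.
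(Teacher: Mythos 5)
Your proposal is correct and follows essentially the same route as the paper: bound the last phase $p$ by $\lceil \log D \rceil + 1$ via Theorems~\ref{dist-line} and~\ref{dist-no-line}, then sum the geometric series of phase durations $8\cdot 2^i$ to get $O(2^p)=O(D)$. Your extra verification that the meeting completes within phase $p$ (comparing $a_{II}$'s at-most-$2^p$ rounds to reach $u$ with $a_I$'s later second visit to $u$) is the same argument the paper gives in the high-level discussion preceding Theorem~\ref{dist-line}, so nothing is missing.
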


\begin{proof}
Let $p$ be the last phase of Algorithm {\tt Simultaneous Start}. 
Since the agents start at a distance at most $D$,
Theorems \ref{dist-line} and \ref{dist-no-line} imply that  $p\leq  \lceil \log D\rceil+1$. 
For any $i<p$, the duration of phase $i$ is $8 \cdot 2^i$, and the duration of phase $p$ is at most $8 \cdot 2^p$.
Hence the duration of the entire algorithm is at most $8\cdot 2^{p+1}$. Since $p\leq  \lceil \log D\rceil+1$, the complexity of the algorithm is in $O(D)$.
\end{proof}

\section{Arbitrary Delay with No Knowledge}

In this section we do not make any of the facilitating assumptions made in the previous sections: agents start with arbitrary delay and they do not have any a priori knowledge. In particular, unlike in Section \ref{sec:known}, they do not know any upper bound on their initial distance. We prove that, in this most difficult scenario,  rendezvous time complexity is strictly larger than in any of the two previously considered cases.  Now rendezvous cannot be achieved in time $O(D)$. This is implied by the following result.

\begin{theorem}
Let $\cA$ be any deterministic rendezvous algorithm. There exists an infinite sequence of positive integers $(d_0,d_1,d_2,\dots)$ diverging to infinity,  and a positive constant $c$, such that, for any $i\geq 0$, there exist two nodes of the grid at a distance $d_i$, and an integer $\delta_i$,  that satisfy the following property: agents woken up at these nodes with delay $\delta_i$ and executing algorithm $\cA$ must use time at least $cd_i^{\sqrt{2}}$ to accomplish rendezvous. 
\end{theorem}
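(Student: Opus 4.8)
The plan is to prove the lower bound by exhibiting, for a given deterministic algorithm $\cA$, adversarial placements that force any two agents to behave as shifts of each other for a long time, so that rendezvous is delayed. The key structural fact, already noted in the introduction, is that before an agent ever visits a foreign node, its trajectory is completely determined by $\cA$ alone (it is a single canonical walk $T$ starting at the origin, independent of the adversary). Thus if the adversary places the two bases far apart and staggers their wakeups by a delay $\delta$, the two agents trace out the \emph{same} walk $T$, one shifted in space by the base-offset vector and in time by $\delta$, until the moment one of them first steps onto a node already marked by the other. Rendezvous cannot happen before that first foreign-node encounter, because up to that point the two marked trajectories are disjoint shifted copies and the agents are never simultaneously at a common node. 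So the entire task reduces to analyzing the single canonical walk $T$ and bounding how long $\cA$ can postpone the first collision between a shifted-in-space-and-time copy of $T$ and $T$ itself.

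\textbf{Reducing to a covering-growth argument on one walk.}
First I would fix notation: let $T=(p_0,p_1,p_2,\dots)$ be the canonical trajectory, with $p_0$ the origin, and let $V_t=\{p_0,\dots,p_t\}$ be the set of nodes marked by time $t$. The adversary chooses an offset vector $\sigma$ (so the second agent walks $\sigma+p_j$) and a delay $\delta$. A foreign-node hit of the later agent at time $j$ (after its wakeup) means $\sigma+p_j \in V_{j+\delta}$, i.e. $\sigma + p_j = p_i$ for some $i\le j+\delta$; symmetrically the earlier agent can hit a node marked by the later one. The point is that rendezvous time is at least the first index at which such a coincidence occurs, and the adversary wants to pick $\sigma,\delta$ at a prescribed distance $d_i=|\sigma|$ so that this first coincidence index is $\ge c\,d_i^{\sqrt2}$. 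The natural quantity to track is the growth rate of $V_t$: if $T$ visits roughly $t^\alpha$ distinct nodes by time $t$ (for some exponent $\alpha$ governed by how ``space-filling'' $\cA$ forces the walk to be), then the set of offsets $\sigma$ for which a collision has occurred by time $t$ is contained in the difference set $V_{t+\delta}-V_{t}$, whose size is at most $|V_{t+\delta}|\cdot|V_t|$. The plan is to choose the target distance $d_i$ just beyond the radius that this difference set can reach by time $t$, and then pick $\sigma$ at distance exactly $d_i$ outside it; this guarantees no collision, hence no rendezvous, up to time $t$.

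\textbf{Extracting the exponent $\sqrt2$.}
The exponent $\sqrt2$ should emerge from optimizing the adversary's choice against the two competing constraints. On one hand, to keep the agents from colliding for $t$ rounds, the offset $\sigma$ must avoid the $O(|V_t|^2)$-sized difference set of reachable offsets, and since the agent moves one step per round the reachable nodes lie within distance $t$, so $|V_t|\le (2t+1)^2 = O(t^2)$ offsets are ``close''; the number of lattice points at distance exactly $d$ is $\Theta(d)$, so the adversary can find a safe $\sigma$ at distance $d$ provided the annulus of radius $d$ is not entirely swallowed by the difference set. Balancing the time budget $t$ against the distance $d$ at which a free offset is guaranteed yields a relation of the form $t \gtrsim d^{\sqrt2}$, which is exactly the claimed bound $cd_i^{\sqrt2}$; the divergent sequence $(d_i)$ is then obtained by letting $t\to\infty$ and reading off the corresponding distances. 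I would carry out the steps in this order: (1) isolate the canonical walk and prove the pre-collision shift-invariance; (2) reformulate rendezvous time as the first-collision index of $T$ against its shifted copy; (3) bound the difference set of reachable offsets in terms of $|V_t|$ and $t$; (4) solve the resulting optimization to pin the exponent.

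\textbf{The main obstacle.}
The hard part will be step (4): establishing that the exponent is precisely $\sqrt2$ rather than something weaker, and in particular proving the \emph{lower} bound on the adversary's freedom uniformly over \emph{all} algorithms $\cA$. A naive counting argument ($|V_t|\le (2t+1)^2$ giving a quadratic difference set) is likely to produce a polynomially weaker exponent; getting $\sqrt2$ almost certainly requires a subtler, possibly iterative or two-level, argument that plays the delay $\delta$ off against the offset $\sigma$ and exploits that an algorithm making $V_t$ grow fast in one regime is forced to be slow in another. I expect the delicate point to be quantifying this trade-off and choosing the sequence $(d_i,\delta_i)$ so that the constant $c$ can be made uniform and the lower bound holds for arbitrarily large $d_i$, rather than just for a single scale.
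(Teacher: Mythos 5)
Your setup is sound and matches the paper's implicit starting point: the canonical solo trajectory $T$, the observation that both agents behave as space- and time-shifted copies of $T$ until the first foreign-node hit, and the reformulation of the rendezvous time as a first-collision index between $T$ and its shifted copy. The genuine gap is step (4), the only step where the exponent $\sqrt{2}$ could come from, and the mechanism you propose there cannot deliver it. First, $|V_t|\le t+1$ (the walk visits at most one new node per round), so the set of offsets killed by local time $t$ has size at most $(t+1)(t+\delta+1)$, while the number of grid nodes at $\ell_1$-distance exactly $d$ is $4d$ and at distance at most $d$ is about $2d^2$. Whichever way you balance these cardinalities, pure counting over offsets guarantees a surviving offset only in regimes like $t=O(\sqrt{d})$ (for exact distance $d$) or $t=O(d)$ (for distance at most $d$, with no control over where the survivor lies) --- i.e., nothing beyond the trivial speed-of-light bound $\Omega(d)$. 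The assertion that ``balancing the time budget $t$ against the distance $d$ yields $t\gtrsim d^{\sqrt{2}}$'' is not derived and is false for this balance. The reason counting cannot work is that it never uses the correctness of the algorithm: it is the fact that a correct algorithm is \emph{forced} to cover space that creates the tension, and your framework has no hook for that. You sense this (``fast in one regime, slow in another''), but you do not make it concrete, so the core of the proof is missing.

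What the paper actually does is a dichotomy on the excursion behavior of $T$, quantified by annulus coverage. If $T$ omits nodes $v_i$ at distances $d_i\to\infty$ from the base $v$, the adversary places the later agent at $v_i$ and wakes it in the round $\delta_i$ when the earlier agent first reaches distance $4d_i^{\sqrt{2}}$ from $v$ (such a round exists since $T$ is unbounded); the agents are then at distance at least $3d_i^{\sqrt{2}}$ and can close at most $2$ per round, so time exceeds $d_i^{\sqrt{2}}$. Otherwise $T$ covers every node beyond some radius, and the paper defines $f(x)$ as the maximum distance from $v$ attained by the prefix of $T$ ending when $T$ visits the last node of the annulus $R(v,x)$, which contains at least $2x^2$ nodes. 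If $f(x_i)>3x_i^{\sqrt{2}}$ infinitely often, a delay attack (place the later agent at that last-covered annulus node, wake it in the round when the earlier agent sits at the distant witness) gives distance at least $f(x_i)-2x_i>x_i^{\sqrt{2}}$ at wakeup, hence time $>x_i^{\sqrt{2}}/2$ at initial distance $d_i\le 2x_i$. If instead $f(x)\le 3x^{\sqrt{2}}$ for all large $x$, the adversary starts the agents \emph{simultaneously} at distance $d_i=3f(x_i)$: their trajectories remain disjoint until each has finished its own annulus $R(\cdot,x_i)$, which takes at least $x_i^2\ge (f(x_i)/3)^{\sqrt{2}}=(d_i/9)^{\sqrt{2}}$ rounds. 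The exponent is pinned by equating the coverage-time exponent $2$ (from the $\Theta(x^2)$ nodes an annulus forces the agent to visit) with the excursion-radius exponent $\alpha$ via $2/\alpha=\alpha$, giving $\alpha=\sqrt{2}$. That trade-off --- covering cardinality versus excursion distance, split by the function $f$, with the delay exploited in one branch and simultaneous start in the other --- is the idea your proposal lacks, and no refinement of the difference-set count will substitute for it.
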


\begin{proof}
Fix a deterministic rendezvous algorithm $\cA$. Consider the infinite trajectory $T$ of the agent that is alone in the grid, starts at node $v$ and executes algorithm $\cA$. Hence $T$ is an infinite sequence of nodes of the grid.


First observe that the trajectory $T$ must contain nodes that are arbitrarily far from $v$. Otherwise, for some positive integer $y$, all nodes  of the trajectory $T$ are at a distance at most $y$ from $v$. Then, if agents started at two nodes at a distance $3y$, their trajectories would never intersect and hence rendezvous would be impossible, which is a contradiction.

Suppose that there exists an infinite sequence $(d_0,d_1,d_2,\dots)$ of positive integers, diverging to infinity,  such that there exist nodes 
$(v_0,v_1,v_2,\dots)$ of the grid, with $v_i$ at a distance $d_i$ from $v$, and none of the nodes $v_i$ belongs to $T$. For any $i$, consider the first round $\delta_i$ in which the agent starting at $v$ in round 0 reaches a node at a distance at least $4d_i^{\sqrt{2}}$ from $v$. The adversary puts the other agent at node $v_i$ and wakes it up in round $\delta_i$. Hence, in round $\delta_i$, the two agents are at a distance at least
$4d_i^{\sqrt{2}}-d_i \geq 3d_i^{\sqrt{2}}$ from each other. Counting time starts in round $\delta_i$. Since the agents can decrease their distance by at most 2 in each round, it follows that rendezvous time is larger than $d_i^{\sqrt{2}}$. On the other hand, the initial distance between the agents is $d_i$, which proves the theorem in this case, with constant $c=1$.

Hence, in the rest of the proof we may suppose that, for some $d>0$, all nodes at a distance larger than $d$ from $v$ are in the trajectory $T$.
For any natural number $x$, let $R(v,x)$ denote the set of nodes of the grid whose distances from $v$ are larger than $x$ but at most $2x$. Hence, for any $x\geq d$, the number of nodes in $R(v,x)$ is at least $2x^2$ and all nodes of $R(v,x)$ are in the trajectory $T$.
Let $X$ be the set of integers larger than $d$.
Define the following function $f: X \longrightarrow \mathcal{N}$, where $ \mathcal{N}$ is the set of natural numbers. Fix an integer $x\in X$. Among all the distances between $v$ and all the nodes that are in the prefix of $T$ at the time when $T$ visits the last node of $R(v,x)$ define $f(x)$ to be the largest one.
 By definition, for any $x \in X$, we have $f(x) \geq 2x$.  There can be many nodes at distance $f(x)$ from $v$ in this prefix. Call the first of them the {\em witness} of $x$. Consider two complementary cases.

\noindent
{\bf Case 1.} There exists an infinite sequence $(x_0,x_1,x_2,\dots)$ of positive integers larger than $d$, diverging to infinity, such that $f(x_i)>3x_i^{\sqrt{2}}$.

Fix a natural number $i$ and let $w$ be the witness of $x_i$. Consider an agent starting at node $v$. Consider the round $t_i$ when the trajectory $T$ visits the last node $u$ belonging to $R(v,x_i)$. Hence, in some round $\delta _i < t_i$ the trajectory visited node $w$. The adversary puts the other agent at node $u$ and wakes it up in round  $\delta _i$. Hence, in round $\delta_i$, the two agents are at a distance at least $f(x_i)-2x_i>3x_i^{\sqrt{2}}-2x_i>x_i^{\sqrt{2}}$. Counting time starts in round $\delta_i$. 
The initial distance between the agents is $d_i$, where $x_i<d_i\leq 2x_i $. The sequence $(d_0,d_1,d_2,\dots)$ diverges to infinity.
Since the agents can decrease their distance by at most 2 in each round, it follows that rendezvous time is larger than $x_i^{\sqrt{2}}/2\geq d_i^{\sqrt{2}}/(2\cdot 2^{\sqrt{2}} )$. This proves the theorem in this case, with constant $c=1/(2\cdot 2^{\sqrt{2}} ) $.

\noindent
{\bf Case 2.} For sufficiently large integers $x$, we have $f(x)\leq 3x^{\sqrt{2}}$. 

Let $g>d$ be a constant such that $f(x)\leq 3x^{\sqrt{2}}$ for all $x \geq g$. For any natural number $i$, let $x_i=g+i$ and let
$d_i=3f(x_i)$. $(d_0,d_1,d_2,\dots)$ is an infinite sequence of positive integers at least $g$, diverging to infinity.
Fix a natural number $i$. Let $u'$ and $u''$ be two nodes at distance $d_i$.
The adversary puts an agent $a'$ at node $u'$, an agent $a''$ at node $u''$ and wakes them up simultaneously in round 0 (i.e., $\delta_i=0$).
Let $t'_i$ be the round in which agent $a'$ visits the the last node of $R(u',x_i)$ and let $t''_i$ be the round in which agent $a''$ visits the last node of  $R(u'',x_i)$. Before round $t'_i$ agent $a'$ can be at a distance at most $f(x_i)$ from $u'$ and before round $t''_i$ agent $a''$ can be at a distance at most $f(x_i)$ from $u''$. Since $d_i=3f(x_i)$,  their respective trajectories until these rounds are disjoint and $t'_i=t''_i$.
Call this common round $t_i$. Thus agents cannot meet before round $t_i$. By the definition of $t_i$ and in view of the fact that each of the sets
$R(u',x_i)$ and $R(u'',x_i)$ has at least $2x_i^2$ nodes, we have $t_i\geq 2x_i^2-1\geq x_i^2$.
Since $f(x_i)\leq 3 x_i^{\sqrt{2}}$, we have $x_i \geq (f(x_i)/3)^{1/\sqrt{2}}$. Hence we get
$t_i\geq  x_i^2 \geq (f(x_i)/3)^{\sqrt{2}}=d_i^{\sqrt{2}}/9^{\sqrt{2}}$. This proves the theorem in this case, with constant $c=1/9^{\sqrt{2}} $.
It follows that the theorem is proven with constant $c=1/23<1/9^{\sqrt{2}}<1/(2\cdot 2^{\sqrt{2}} ) <1$.
\end{proof}

The above theorem implies the following corollary:

\begin{corollary}
Suppose that the agents can be placed by the adversary in arbitrary nodes of the grid at a distance at most $D$ unknown to them and woken up
with arbitrary delay.
No deterministic rendezvous algorithm in this scenario can have time complexity
$o(D^{\sqrt{2}})$.
\end{corollary}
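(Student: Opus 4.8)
The plan is to study the single-agent trajectory $T$ obtained when one agent, started at a node $v$, runs $\cA$ in an otherwise empty grid. Because the agents are anonymous and deterministic, the trajectory of an agent started at any base $u$ is exactly $T$ translated by the vector from $v$ to $u$, and if two agents start simultaneously their positions always differ by that base-vector. The adversary then has essentially two levers, and the whole argument plays them against each other: (i) the inter-agent distance shrinks by at most $2$ per round, so if at the first round of counting the two agents are at distance $\Delta$, rendezvous needs at least $\Delta/2$ rounds; and (ii) to have any chance of meeting, the two translated trajectories must overlap, but a trajectory forced to pass through all the $\Theta(x^2)$ nodes of the $L_1$-annulus $R(v,x)$ needs $\Omega(x^2)$ rounds to do so, since each round changes its position by one node.

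First I would dispose of the easy reductions. The trajectory $T$ must be unbounded, for otherwise two agents placed at distance $3y$, where $y$ bounds $T$, have disjoint trajectories and never meet. Next I would split off the case where infinitely many nodes $v_i$, at distances $d_i\to\infty$ from $v$, never lie on $T$: here lever (i) alone suffices. I let the lone agent run until round $\delta_i$, the first round in which it is at distance $\ge 4d_i^{\sqrt 2}$ from $v$, and have the adversary wake the second agent at $v_i$ in round $\delta_i$; counting starts there with the two agents at distance $\ge 4d_i^{\sqrt 2}-d_i\ge 3d_i^{\sqrt 2}$, so by (i) rendezvous costs more than $d_i^{\sqrt 2}$ rounds.

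The substance lies in the remaining case, where beyond some radius $d$ every node of the grid lies on $T$, so $T$ must cover each annulus $R(v,x)$. The key quantity I would introduce is $f(x)$, the largest distance from $v$ attained by $T$ up to the moment it finishes covering $R(v,x)$, together with its \emph{witness}, the first node realizing that distance. The dichotomy on the growth of $f$ is exactly where the exponent $\sqrt 2$ is forced. If $f(x_i)>3x_i^{\sqrt 2}$ for an infinite diverging sequence $x_i$, then $T$ wanders out to the witness $w$ (distance $>3x_i^{\sqrt 2}$) before returning to cover the last annulus node $u$ (distance $\le 2x_i$); placing the second agent at $u$ and waking it in the round the lone agent reaches $w$ makes the two agents begin counting at distance $>x_i^{\sqrt 2}$, and lever (i) closes the argument, with initial distance $d_i=\dist(v,u)\in(x_i,2x_i]$. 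Otherwise $f(x)\le 3x^{\sqrt 2}$ for all large $x$, and I would use simultaneous start, placing the two agents at distance $d_i=3f(x_i)$: until each has covered its own annulus its trajectory stays within $f(x_i)$ of its base, so the two prefixes are disjoint, and by the translate-symmetry both agents finish covering in the same round $t_i$, before which they cannot have met. Lever (ii) gives $t_i\ge 2x_i^2-1\ge x_i^2$, and the inequality $x_i\ge (f(x_i)/3)^{1/\sqrt 2}$ converts this into $t_i\ge (f(x_i)/3)^{\sqrt 2}=d_i^{\sqrt 2}/9^{\sqrt 2}$; the crossover is precisely the identity $(x^{\sqrt 2})^{\sqrt 2}=x^2$.

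I expect the main obstacle to be this efficient case. I must choose the comparison radius (here $3f(x_i)$) with exactly enough slack that the two simultaneously-started trajectories are provably disjoint until both annuli are covered — using that an annulus node lies within $2x_i$ of its base while the cover-prefix reaches out to $f(x_i)$, so the prefixes stay at distance at least $f(x_i)$ apart — and then invoke the symmetry of the translated trajectories to equate the two covering times $t'_i=t''_i$. Matching the annulus node-count $\ge 2x_i^2$ against the target $d_i^{\sqrt 2}$, and tracking the constants across the two regimes so a single constant $c$ works, is the delicate part; by contrast, the two closing-rate arguments are routine once the quantity $f$ and its witness are in place.
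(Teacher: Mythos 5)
Your proposal reconstructs the paper's own argument essentially verbatim: the same reduction to the single-agent trajectory $T$, the same preliminary cases (bounded $T$, infinitely many unvisited nodes handled by the delayed wake-up at distance $4d_i^{\sqrt{2}}$), the same function $f(x)$ with its witness over the annulus $R(v,x)$, and the same dichotomy at the threshold $3x^{\sqrt{2}}$ — including the simultaneous-start construction at distance $d_i=3f(x_i)$ with disjoint prefixes, equal covering times by translation symmetry, and the bound $t_i\geq x_i^2\geq (f(x_i)/3)^{\sqrt{2}}=d_i^{\sqrt{2}}/9^{\sqrt{2}}$. This is correct and takes essentially the same approach as the paper, down to the constants.
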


On the positive side, we show the rendezvous algorithm {\tt Hardest Scenario} working in time $O(D^2)$, where agents can be placed by the adversary at arbitrary nodes of the grid at a distance at most $D$, unknown to them, and woken up
with arbitrary delay. The algorithm is presented in Section \ref{sect-alg-no-assump}, and 
 the proof of its correctness and complexity is deferred to Section  \ref{sect-proof-no-assump}. 
 
 \subsection{The Algorithm}
\label{sect-alg-no-assump}

The Algorithm {\tt Hardest Scenario} consists of three parts. Whenever rendezvous happens, the algorithm is interrupted immediately.
In the first part, the agent performs the following procedure that is interrupted when the agent makes the first hit.
Intuitively, the procedure produces an infinite spiral whose consecutive sides are in the four cardinal directions $N,W,S,E$, and the spiral fills the entire grid.\\

\noindent
{\bf Procedure} {\tt Squarespiral}\\

\noindent
$p:=1$;\\
{\bf repeat forever}\\
\hspace*{1cm}{\tt go} $N$ for $2p-1$ rounds\\ 
\hspace*{1cm}{\tt go} $W$ for $2p-1$ rounds\\ 
\hspace*{1cm}{\tt go} $S$ for $2p$ rounds\\ 
\hspace*{1cm}{\tt go} $E$ for $2p$ rounds\\ 
\hspace*{1cm}$p:=p+1$;\\

Suppose that the first hit is made at node $u$. (Since the spiral fills the grid, a hit is made at the latest at the base of the other agent).
Let $T$ be the trajectory of the agent traversed in the first part of the algorithm, from the base to node $u$. Let $\hat{T}$ be the reverse trajectory, from $u$ to the base $b$ of the agent.  
In the sequel, we call $T$ the main trajectory of the agent.
Consider all vertical and horizontal lines of the grid that intersect $T$.
We call the distance between the leftmost and the rightmost vertical lines the {\em width} of $T$ and denote it by $T.width$; similarly, we call the distance between the top and the bottom horizontal lines the {\em height} of $T$ and denote it by $T.height$.
We denote by $pred_T(u)$ the predecessor of $u$ in the trajectory $T$.

At the time of the first hit, the second part starts.
The agent executes the following procedure.\\

	\noindent
{\bf Procedure} {\tt PredecessorProbe($T, u$)}\\
\hspace*{1cm} {\bf if} the agent made a horizontal hit at $u$ {\bf then}\\
\hspace*{2cm}stay at $u$ for $T.height$ rounds;\\
\hspace*{1cm} {\bf else} \\
\hspace*{2cm}stay at $u$ for $T.width$ rounds;\\
\hspace*{1cm}move to $pred_T(u)$;\\
\hspace*{1cm}move to $u$;\\

After completing the above procedure, the agent is back at the node $u$ where it made the first hit, and the third part starts.
The agent chooses one of the two following actions: 
\begin{itemize}
	\item Action $I$: perform the following procedure {\tt Swing}, interrupting it at the rendezvous:\\
	
	\noindent
	{\bf Procedure} {\tt Swing}\\
	{\bf repeat forever}\\
	\hspace*{1cm}follow $\hat{T}$; follow $T$;\\
	
	Intuitively, the agent ``swings'' on the part of the spiral produced in the first part of the algorithm, from node $u$ to the base of the agent and back, until the other agent is met.
	
	\item Action $II$: stay at $u$ forever.	
\end{itemize}

It remains to establish the rule of choosing the above actions. In order to do that,
we define, for any node $s$, its $Dir$-neighbor to be the neighbor of $s$ corresponding to port $Dir$ at $s$, where $Dir \in \{N, S, W, E\}$.
The decision which of these actions has to be chosen is given in Table \ref{tab-decision-unknown}.
It is based on which of the four neighbors of $u$ are on the trajectory $T$ of the agent.

\begin{table}[!h]
	\begin{center}
		\begin{tabular}{ |c|c c c c |c |} 
			\hline
			\multirow{ 2}{*}{Row}&\multicolumn{4}{|c|}{Input} & \multirow{ 2}{*}{Action} \\
			\cline{2-5}
			&$W$-neighbor & $S$-neighbor & $E$-neighbor & $N$-neighbor & \\
			\hline \hline
			1&0 &1 & 0 or 1 & 0  & II \\
			\hline
			2&0 or 1 & 0 & 0 & 1  & I   \\
			\hline
			3&0 & 0 & 1 & 0 or 1  & I   \\
			\hline
			4&1 & 0 or 1 & 0 & 0  & II   \\
			\hline
		\end{tabular}
		\caption{\label{tab-decision-unknown}The decision table for Algorithm {\tt Hardest Scenario}. }
	\end{center}
\end{table}
In the table, each row records two of the eight possible inputs the agent might see, and the corresponding action that the agent should take.
More specifically, if the $Dir$-neighbor of $u$ is on the trajectory of the agent, then we  write ``1'' in the input column that corresponds to the $Dir$-neighbor; otherwise, we write ``0''.
Take the first row as an example. This row corresponds to the following two inputs: ``the only neighbor of $u$ on $T$ is the $S$-neighbor'' and ``the only neighbors of $u$ on $T$ are the $S$-neighbor and the $E$-neighbor''.
For each of these inputs, the agent chooses action II.
Note that all inputs listed in the table are pairwise exclusive, hence the choice of the action is unambiguous.
We will prove that the eight inputs noted in the table are the only ones that can happen.

\subsection{Correctness and complexity}\label{sect-proof-no-assump}

The high-level idea of the proof of correctness is to show that, either rendezvous occurs before each agent starts executing the third part of Algorithm {\tt Hardest Scenario}, or both agents get, in the beginning of the third part of the algorithm, some of the inputs enumerated in Table \ref{tab-decision-unknown}, and choose  different actions based on these inputs.
This is enough to guarantee rendezvous in the end. Indeed, 
let $a_I$ and $a_{II}$ be the agents that perform actions $I$ and $II$, respectively.
After performing action $II$, agent $a_{II}$ stays at a node $u$ in the main trajectory of $a_I$. 
On the other hand, $a_{I}$, which chooses action $I$, keeps performing Procedure {\tt Swing} until rendezvous.
After $a_{II}$ stabilizes at $u$, at most one execution of Procedure {\tt Swing} performed by agent $a_{I}$ guarantees that rendezvous occurs.
Before proving the correctness of Algorithm {\tt Hardest Scenario}, we introduce some notations and a fact relevant to the algorithm, as well as Lemma \ref{lem-hitting-region}.

Let $B$ and $R$ denote the two agents. 
If one agent is denoted by $X$, then we use $\overline{X}$ to denote the other agent. Hence $\overline{B}=R$ and $\overline{R}=B$.
We assume that agent $X$, where $X\in \{B, R\}$, makes its first hit at node $u_X$ in round $t_X$.
Node $u_X$ is always different from node $u_{\overline{X}}$ 
and each agent makes at most one hit, in view of Algorithm {\tt Hardest Scenario}.
Let $T_X$ denote the the main trajectory of agent $X$, where $X\in\{B, R\}$.
The number of edges on $T_X$ is denoted by $|T_X|$.
Without loss of generality we may assume that  $|T_X|>0$.

\begin{fact}
	\label{fact-counterclockwise}
	Since each agent follows its square spiral trajectory counterclockwise in the first part of the algorithm, all the following twelve statements, concerning moves of the agent in this part, are straightforward.
	\begin{itemize}
		\item[] i) If an agent is going North in round $t$ and arrives at a node $v$ in round $t$, then no nodes that are right of $v$ can be on the trajectory of this agent by round $t$.
		\item[] ii) If an agent is going East in round $t$ and arrives at a node $v$ in round $t$, then no nodes that are below $v$ can be on the trajectory of this agent by round $t$.
		\item[] iii) If an agent is going South in round $t$ and arrives at a node $v$ in round $t$, then no nodes that are left of $v$ can be on the trajectory of this agent by round $t$.
		\item[] iv) If an agent is going West in round $t$ and arrives at a node $v$ in round $t$, then no nodes that are above $v$ can be on the trajectory of this agent by round $t$.
		\item[] v) If an agent visits a node $v$ for the first time before visiting the $W$-neighbor of $v$ then, in the round when the agent visits the $W$-neighbor of $v$ for the first time, the agent is moving either South or West.
		\item[] vi) If an agent visits a node $v$ for the first time before visiting the $N$-neighbor of $v$ then, in the round when the agent visits the $N$-neighbor of $v$ for the first time, the agent is moving either West or North.
		\item[] vii) If an agent visits a node $v$ for the first time before visiting the $E$-neighbor of $v$ then, in the round when the agent visits the $E$-neighbor of $v$ for the first time, the agent is moving either North or East.
		\item[] viii) If an agent visits a node $v$ for the first time before visiting the $S$-neighbor of $v$ then, in the round when the agent visits the $S$-neighbor of $v$ for the first time, the agent is moving either East or South.
		\item [] ix) If in round $t_X$, i.e., the round when agent $X$ makes its first hit, where $X\in \{B, R\}$, the agent $X$ is moving East, then $T_X.height$ is an even number.
		\item [] x) If in round $t_X$, agent $X$ is moving North, then $T_X.width$ is an even number.
		\item [] xi) If in round $t_X$, agent $X$ is moving West, then $T_X.height$ is an odd number.
		\item [] xii) If in round $t_X$, agent $X$ is moving South, then $T_X.width$ is an odd number.
	\end{itemize}
\end{fact}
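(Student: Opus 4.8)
The plan is to nail down the exact geometry of the square spiral once and for all, and then read each of the twelve statements directly off it. Put the base at the origin and track the corners reached in iteration $p$ of Procedure {\tt Squarespiral}: the {\tt go} $N$ segment climbs the column $x=p-1$ to $(p-1,p)$, the {\tt go} $W$ segment runs along the row $y=p$ to $(-p,p)$, the {\tt go} $S$ segment descends the column $x=-p$ to $(-p,-p)$, and the {\tt go} $E$ segment runs along the row $y=-p$ to $(p,-p)$, so iteration $p$ ends at $(p,-p)$. A one-line induction on $p$ shows that everything visited through iteration $p$ lies in $[-p,p]\times[-p,p]$, and, more usefully, records which boundary line the agent is currently sitting on: throughout the $N$ segment the rightmost visited column is $x=p-1$, throughout the $E$ segment the bottommost visited row is $y=-p$, throughout the $S$ segment the leftmost column is $x=-p$, and throughout the $W$ segment the topmost row is $y=p$.

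Statements i)--iv) are then immediate: a North-moving agent occupies the rightmost column reached so far, so nothing visited lies strictly to its right, and the other three cases are the same statement rotated by $90^\circ$. I would deduce v)--viii) from i)--iv) by eliminating two of the four possible entry directions into the neighbour. For v), assume $v$ is visited before its $W$-neighbour $v_W$. Entering $v_W$ while moving East is impossible, since the next step would land on $v$ and force $v_W$ before $v$; entering $v_W$ while moving North is impossible, since by i) nothing strictly to the right of $v_W$ --- in particular $v$ --- can yet be on the trajectory. Hence $v_W$ is first entered moving South or West, as claimed, and vi)--viii) follow by the analogous elimination using ii), iii) and iv) respectively.

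For ix)--xii) I would compute, in each case, the single relevant dimension of the bounding box at the instant of the hit. If agent $X$ is moving East when it first hits, it lies on the row $y=-p$ of some iteration $p$; by that point the $N$ segment has set the topmost row to $y=p$ and the $S$ segment has set the bottommost to $y=-p$, so the height is $2p$, which is even. The other three cases read off the same way: moving North gives width $2(p-1)$ (even), moving West gives height $2p-1$ (odd, since the bottom is still only $y=-(p-1)$), and moving South gives width $2p-1$ (odd). The clean point is that the dimension in question is perpendicular to the current segment, and both of its bounding lines are fixed before, or exactly at the start of, that segment, so the parity is independent of where along the segment the hit occurs.

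The only place demanding a little care is the bookkeeping of which boundary lines have already been attained at a given moment --- in particular checking the degenerate first iteration $p=1$ and confirming that iteration $p-1$ has already pushed the opposite boundary out to the stated value --- but I do not expect any genuine obstacle: once the four corner coordinates are written down, every one of the twelve statements collapses to reading off a coordinate or the difference of two coordinates.
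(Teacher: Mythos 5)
Your proposal is correct: the paper offers no proof of this fact, declaring all twelve statements straightforward consequences of the counterclockwise square spiral, and your explicit parametrization of iteration $p$ (corners $(p-1,p)$, $(-p,p)$, $(-p,-p)$, $(p,-p)$) together with the boundary-line bookkeeping (the agent always travels along the current rightmost, topmost, leftmost or bottommost line, depending on its direction) verifies each item exactly in that spirit, including the correct parities $2p$, $2(p-1)$, $2p-1$, $2p-1$ in ix)--xii). The only rough edge is in your elimination step for v)--viii): if the neighbour in question is first entered at a corner of the spiral (e.g.\ $v_W=(p,-p)$ entered moving East), the next step turns rather than ``landing on $v$'', but $v$ is then first visited only during a segment of a later iteration, so the elimination --- and hence the conclusion --- still stands.
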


Define a {\em corner} as any node where agent $X$, for $X\in\{B, R\}$, changes direction when it follows its main trajectory $T_X$.
Let $a,b,c,d$ be the first four corners on the reverse trajectory $\hat{T}_X$.
If the trajectory has fewer than 4 corners, then we assign all unused labels from the set $\{a, b,c,d\}$ to the base of the agent.
For example, if the trajectory $\hat{T}_X$ has only two corners, then we label them as $a$ and $b$, respectively, and label the base of the agent as both $c$ and $d$.
Consider the set of all neighbors of $u_X$ on $T_X$. 
Let $v_X$ denote the node in this set closest to $d$.
Note that node $v_X$ is always unique.
We define the {\em late part} of the trajectory $T_X$ as follows.
If $v_X=pred_{T_X}(u_X)$ then the late part of $T_X$ is its suffix between $d$ and $u_X$; otherwise, it is its suffix between $v_X$ and $u_X$ (cf. Fig. \ref{fig-late-region}).

\begin{figure}[h!]
	\centering
	\begin{subfigure}[t]{0.49\textwidth}
		\centering
			\includegraphics[scale=1]{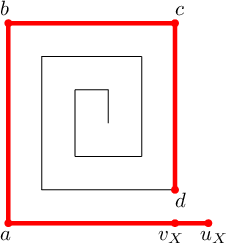}
		\caption{}
	\end{subfigure}%
	\begin{subfigure}[t]{0.49\textwidth}
		\centering
		\includegraphics[scale=1]{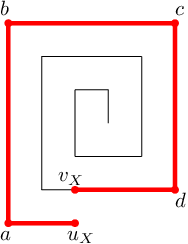}
		\caption{}
	\end{subfigure}
	\caption{\label{fig-late-region}Late parts of trajectories. In figure (a), we have $v_X=pred_{T_X}(u_X)$. In figure (b), we have  $v_X\neq pred_{T_X}(u_X)$. The late part of each trajectory is highlighted in red and bold.}
\end{figure}


\begin{lemma}
	\label{lem-hitting-region}
	If only one neighbor of $u_X$ is on $T_X$ then the node where agent $\overline{X}$ could make a hit is either $pred_{T_X}(u_X)$ or a node in the segment between $c$ and $d$.
	If exactly two neighbors of $u_X$ are on $T_X$, then the node where $\overline{X}$ could make a hit is either $pred_{T_X}(u_X)$ or a node in the segment between $d$ and $v_X$.
\end{lemma}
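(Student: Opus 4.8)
The plan is to reduce the statement to a geometric fact about two translated copies of the square spiral and then to localise their overlap near the end of $T_X$ using Fact~\ref{fact-counterclockwise}. Since both agents execute the same deterministic procedure {\tt Squarespiral}, the trajectory $T_{\overline X}$ coincides, up to the first hit of each agent, with the translate $T_X+\tau$, where $\tau$ is the vector from the base of $X$ to the base of $\overline X$. Hence any node at which $\overline X$ can make a hit must already be marked by $X$ and must be visited by $\overline X$; that is, it must be a common node $w$ of $T_X$ and $T_X+\tau$, equivalently $w\in T_X$ and $w-\tau\in T_X$. Because $u_X$ is the endpoint of $T_X$ and, by hypothesis, lies on $T_{\overline X}$, the node $u_X-\tau$ also lies on $T_X$; this fixes $\tau$ as a difference of two spiral points and reduces the lemma to showing that every common node of $T_X$ and $T_X+\tau$ other than $u_X$ lies in the asserted region. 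A quantitative input I will need here is that $\tau$ is small relative to the final loop of $T_X$, so that the two copies can meet only along two consecutive loops; this should follow from $u_X$ being reached as an endpoint on both spirals, which have comparable extent, together with the minimality of $u_X$ as the first hit of $X$.

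Next I would pin down the direction in which $X$ moved into $u_X$. The path-predecessor $pred_{T_X}(u_X)$ is always a neighbour of $u_X$ on $T_X$, namely the neighbour opposite to $X$'s last move, so the hypotheses ``exactly one neighbour on $T_X$'' and ``exactly two neighbours on $T_X$'' translate, through parts (v)--(viii) of Fact~\ref{fact-counterclockwise}, into whether the final straight segment of $T_X$ is flanked by a parallel segment of the preceding loop. In the two-neighbour case the extra neighbour is precisely $v_X$, the neighbour of $u_X$ closest to the fourth reverse-corner $d$, and it sits on that preceding loop; in the one-neighbour case no such flanking segment exists and $v_X=pred_{T_X}(u_X)$. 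The parity statements (ix)--(xii) then fix whether the relevant side length of the last loop is even or odd, which is what makes the adjacency between the last loop and the preceding one terminate exactly at $d$ (respectively at $v_X$).

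With the heading and parity in hand, the localisation proceeds by walking backward from $u_X$ along $\hat T_X$ and asking when $T_X+\tau$ first re-meets $T_X$. One step back one recovers the candidate $pred_{T_X}(u_X)$, obtained by translating the corresponding step near $u_X-\tau$. After that, the counterclockwise turning facts (i)--(iv) forbid any further coincidence until the walk has retreated onto the preceding loop: since each loop strictly encloses its predecessor and the spiral always turns left, a short translate can identify points only on two consecutive loops, and never on a more interior one. This confines the admissible hit nodes to $pred_{T_X}(u_X)$ together with the late part of $T_X$, which is the segment between $c$ and $d$ in the one-neighbour case and the segment between $d$ and $v_X$ in the two-neighbour case.

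I expect the main obstacle to be the careful case analysis that simultaneously rules out common nodes on earlier loops and pins the exact terminal corner of the admissible segment ($c$--$d$ versus $d$--$v_X$). This forces one to combine the heading of $X$ at $u_X$, the even/odd values of $T_X.width$ and $T_X.height$ from (ix)--(xii), and the precise offsets between consecutive sides of the spiral, while controlling the magnitude of $\tau$ through the symmetric coupling between $X$ and $\overline X$ (each agent's endpoint index being tied to the other's). Establishing that $\tau$ is indeed confined to neighbouring loops, rather than merely arguing the geometry once this is assumed, is likely to be the most delicate part.
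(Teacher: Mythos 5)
There is a genuine gap, and it sits exactly where you predicted the difficulty would be: the claim that ``$\tau$ is small relative to the final loop of $T_X$, so that the two copies can meet only along two consecutive loops'' is not only unproven in your sketch (``this should follow\dots'') but false in the arbitrary-delay setting this lemma lives in. The adversary controls the delay, so the sizes of the two truncated spirals are essentially unrelated to $|\tau|$: if $\overline{X}$ is woken much earlier and its spiral already covers all neighbors of $X$'s base, then $X$ hits on its very first move and $|T_X|=1$, while the bases are arbitrarily far apart; in general the final loop of $T_X$ has radius at most roughly the base distance, never the other way around. So no bound of the form ``$\tau$ small compared to the last loop'' is available, and the purely geometric localisation of $T_X\cap(T_X+\tau)$ collapses. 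Moreover, the overlap question you reduce to is the wrong one: every node of $T_X$ other than $u_X$ is domestic for $X$, and every node of $T_{\overline{X}}$ other than $u_{\overline{X}}$ is domestic for $\overline{X}$, so the set intersection $T_X\cap T_{\overline{X}}$ is automatically contained in $\{u_X,u_{\overline{X}}\}$. The lemma is not about how two translated spirals overlap as sets; it is about where on $T_X$ the single node $u_{\overline{X}}$ can sit, and that is governed by the temporal marking order, which a translate-intersection argument cannot see.

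The mechanism your proposal is missing is the paper's shielding argument (its Claim~\ref{claim-hitting-exlusive}): every node $s$ of $T_X$ outside the late part has all four neighbors on $T_X$ already by round $t_X-1$, hence all domestic for $X$; $s$ also cannot be $\overline{X}$'s base (else $X$ would have hit before round $t_X$); and since Procedure {\tt Squarespiral} is interrupted at the first hit, $\overline{X}$ would make its hit at one of these neighbors and stop before ever reaching $s$ --- \emph{regardless} of the magnitude of $\tau$. This confines $u_{\overline{X}}$ to the late part with no quantitative control on $\tau$ at all. The refinement down to $pred_{T_X}(u_X)$ together with the segment between $c$ and $d$ (resp.\ between $d$ and $v_X$) then uses a visit-order argument --- $\overline{X}$ must first-visit $u_X$ before $pred_{T_X}(u_X)$ (resp.\ before $v_X$), since otherwise it would have hit there and never reached $u_X$, which is domestic for $\overline{X}$ --- combined with parts (v)/(vi) and (iii)/(iv) of Fact~\ref{fact-counterclockwise} to exclude everything left of (resp.\ above) the relevant node. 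Note also that the parity statements (ix)--(xii) you invoke to ``pin the terminal corner'' play no role in this lemma; the paper uses them only later, to analyse the timing of Procedure {\tt PredecessorProbe}. Your instinct to use the first-visit direction facts (v)--(viii) and the monotonicity facts (i)--(iv) is right, but they must be applied to the marking dynamics, not to a static overlap of translated spirals.
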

\begin{proof}
	We start with the following claim.
	
	\begin{claim}
		\label{claim-hitting-exlusive}
		Let $X\in \{B, R\}$. Then any node at which $\overline{X}$ can make a hit is on the late part of the trajectory $T_X$.

	\end{claim}
	
	In order to prove the claim, let $s$ denote any node on $T_X$ but not on the late part of $T_X$ and let $s'$ denote any node on the late part of $T_X$.
	Observe that the difference between nodes $s$ and $s'$ is that all the neighbors of $s$ are on $T_X$ in round $t_X-1$, while at least one neighbor of $s'$ is not on $T_X$ in this round.
 We show that agent $\overline{X}$ can never visit node $s$.
 
 Notice that node $s$ cannot be the base of agent $\overline{X}$.
 Indeed, let $t$ denote the round when agent $X$ visits $s$ for the first time.
 Then $t<t_X$, as node $s$ is different from $u_X$ and node $s$ is visited by agent $X$ earlier than $u_X$ is.
 If node $s$ was the base of agent $\overline{X}$, then agent $X$ would make its first hit no later than in round $t$.
 This contradicts the fact that agent $X$ makes its first hit in round $t_X$.
 So, $s$ cannot be the base of agent $\overline{X}$.
 
 Hence, agent $\overline{X}$ cannot visit $s$, without visiting at least one neighbor of $s$.
 Note that all the neighbors of $s$ are domestic for agent $X$, since agent $X$ has visited all of them before round $t_X$ and no hit was made by $X$ at any of them.
  When agent $\overline{X}$ arrives at some neighbor of $s$, it makes a hit there and never explores any nodes it has not visited previously.
 Therefore, agent $\overline{X}$ can never visit $s$.
 This proves that ${\overline{X}}$ can make a hit only on the late part of $T_X$. $\diamond$
	
	
	We will call the nodes in the late part of $T_X$ the {\em potential nodes} for $\overline{X}$.
	 	In view of Claim \ref{claim-hitting-exlusive}, 
	only the potential nodes are the ones where agent $\overline{X}$ could make a hit.

	In round $t_X$, agent $X$ could be moving in one of the four cardinal directions $N,W,S$ or $E$.
	We give the proof for the case in which $X$ moves East in round $t_X$, and the proofs for the other three cases are similar.
	In view of Algorithm {\tt Hardest Scenario}, we know that the $W$-neighbor of $u_X$ is always on $T_X$, but neither the $E$-neighbor nor the $S$-neighbor of $u_X$ is.
	If the $N$-neighbor of $u$ is not on $T_X$, then only one neighbor of $u_X$ is on $T_X$.
	Otherwise, exactly two neighbors of $u_X$ are on $T_X$.
	We consider both cases separately.
	
	First, assume that the $N$-neighbor of $u_X$ is not on $T_X$ (cf. Fig. \ref{fig-late-region} (a)).
	Note that in this case, nodes $pred_{T_X}(u_X)$, the $W$-neighbor of $u_X$ and $v_X$ are all equal.
	It follows that node $pred_{T_X}(u_X)$ cannot be on the trajectory $T_{\overline{X}}$, by round $t_X-1$; otherwise, a hit would be made by agent $X$ in round $t_X-1$.
	Since agent $\overline{X}$ has visited $u_X$ by round $t_X$ and agent $\overline{X}$ cannot visit $u_X$ and $pred_{T_X}(u_X)$ in the same round, we know that $u_X$ has to be visited by $\overline{X}$ for the first time before its $W$-neighbor, $pred_{T_X}(u_X)$, is.
	In view of part $v)$ of Fact \ref{fact-counterclockwise},
	if $\overline{X}$ visits $pred_{T_X}(u_X)$, then in the round when $\overline{X}$ visits $pred_{T_X}(u_X)$ for the first time, $\overline{X}$ moves West or South.
	If $\overline{X}$ moves West, it makes a hit at $pred_{T_X}(u_X)$, since $\overline{X}$ cannot visit $pred_{T_X}(u_X)$ for the first time by round $t_X-1$, and agent $X$ is exactly at $pred_{T_X}(u_X)$ in round $t_X-1$.
	Therefore, $pred_{T_X}(u_X)$ is a node where agent $\overline{X}$ might make a hit.
	If $\overline{X}$ moves South, it could make a hit before visiting $pred_{T_X}(u_X)$.
	Regardless of whether $\overline{X}$ makes its first hit at $pred_{T_X}(u_X)$ or before visiting $pred_{T_X}(u_X)$, in view of part $iii)$ of Fact \ref{fact-counterclockwise}, agent $\overline{X}$ cannot visit any nodes that are left of $pred_{T_X}(u_X)$.
	Therefore, $\overline{X}$ makes a hit either at $pred_{T_X}(u_X)$ or at a node in the segment between $c$ and $d$.
	
	Next, assume that the $N$-neighbor of $u_X$ is on $T_X$ (cf. Fig. \ref{fig-late-region} (b)).
	Note that in this case, node $v_X$ is the $N$-neighbor of $u_X$.
	Using the same arguments as in the previous paragraph, agent $\overline{X}$ cannot visit any nodes that are left of $pred_{T_X}(u_X)$.
	So, all potential nodes that are left of $pred_{T_X}(u_X)$ are excluded.
	Observe that node $v_X$ is domestic for $X$ and node $u_X$ is domestic for $\overline{X}$.
	It follows that agent $\overline{X}$ has to visit $u_X$ for the first time before visiting the $N$-neighbor $v_X$ of $u_X$.
	Otherwise, agent $\overline{X}$ would make a hit at $v_X$, and would never visit $u_X$.
	In view of part $vi)$ of Fact \ref{fact-counterclockwise}, if agent $\overline{X}$ visits $v_X$, then $\overline{X}$ moves either North or West in the round when it visits $v_X$ for the first time.
	If agent $\overline{X}$ moves North, it makes a hit at node $v_X$, since $v_X$ is domestic for agent $X$.
	If agent $\overline{X}$ moves West, it could make a hit before visiting $v_X$.
	Regardless of whether $\overline{X}$ makes its first hit at $v_X$ or before visiting $v_X$, it cannot visit any node that is above $v_X$, in view of part $iv)$ of Fact \ref{fact-counterclockwise}.
	Therefore, agent $\overline{X}$ makes its first hit either at $pred_{T_X}(u_X)$ or at a node in the segment between $d$ and $v_X$.
	This completes the proof.
\end{proof}

The rest of the proof of correctness of the algorithm is divided into two cases: either $u_{{X}}= pred_{T_{\overline{X}}}(u_{\overline{X}})$ for some $X\in\{B, R\}$ or $u_{{X}}\neq pred_{T_{\overline{X}}}(u_{\overline{X}})$ and $u_{\overline{X}} \neq pred_{T_{{X}}}(u_{{X}})$.

\noindent

\begin{lemma}
	\label{theorem-ends-in-2nd-part}
	Let $t_{X, 3}$ denote the round when agent $X$, for any $X\in\{B, R\}$, starts executing the third part of Algorithm {\tt Hardest Scenario}.
	If $u_{{X}}= pred_{T_{\overline{X}}}(u_{\overline{X}})$ for some $X\in\{B, R\}$, then either rendezvous happens before round $\min(t_{X, 3}, t_{\overline{X}, 3})$ or agents have chosen different actions, from Table \ref{tab-decision-unknown}, by round $\max(t_{X, 3}, t_{\overline{X}, 3})$.
\end{lemma}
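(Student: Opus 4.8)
The plan is to extract from the hypothesis a rigid local picture around the two hit nodes, split according to whether the symmetric predecessor relation also holds, dispatch the symmetric case by a parity argument on the probe, and dispatch the asymmetric case through the decision table after pinning down both hit directions. First I would record the consequences of $u_X=pred_{T_{\overline{X}}}(u_{\overline{X}})$ (fixing the $X$ for which it holds). The nodes $u_X$ and $u_{\overline{X}}$ are grid-adjacent; write $Dir$ for the direction from $u_X$ to $u_{\overline{X}}$ and $\overline{Dir}$ for its opposite, so $\overline{X}$ makes a $Dir$-hit at $u_{\overline{X}}$ arriving from $u_X$. Since $u_{\overline{X}}$ is marked when $\overline{X}$ first reaches it and $u_X$ is marked when $X$ first reaches it, one checks that $u_X$ is domestic for $\overline{X}$ and foreign for $X$, that $u_{\overline{X}}$ is domestic for $X$ and foreign for $\overline{X}$, and that both nodes lie on both main trajectories; comparing the marking rounds gives $t_{\overline{X}}-1<t_X$. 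I then split on whether $u_{\overline{X}}=pred_{T_X}(u_X)$ (Case a) or $u_{\overline{X}}\neq pred_{T_X}(u_X)$ (Case b).

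In Case (a) both predecessor relations hold, so the two marking inequalities combine to force $t_X=t_{\overline{X}}$: both agents reach their hit nodes, hence begin {\tt PredecessorProbe}, in the same round. Here $X$ makes a $\overline{Dir}$-hit and $\overline{X}$ a $Dir$-hit, so the two hits have the same orientation, and during the probe each agent first stays put for $T.height$ rounds (horizontal hits) or $T.width$ rounds (vertical hits). The key point is that, by parts ix)--xii) of Fact \ref{fact-counterclockwise}, the two waiting counts have opposite parity: for horizontal hits an $E$-hit forces even height while the opposing $W$-hit forces odd height, and symmetrically for vertical hits an $N$-hit forces even width while an $S$-hit forces odd width. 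Hence the waiting times differ; the agent with the shorter wait, while still inside the other agent's waiting window, steps onto its own $pred_{T}(\cdot)$, which is exactly the other agent's hit node. This produces rendezvous strictly before $\min(t_{X,3},t_{\overline{X},3})$, the first disjunct of the lemma.

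In Case (b) I would first identify both hit directions and then read off the actions. Since $X$ visits $u_{\overline{X}}$ (the $Dir$-neighbor of $u_X$) before $u_X$, parts v)--viii) of Fact \ref{fact-counterclockwise} leave exactly two possible directions for $X$ when it arrives at $u_X$; the case-(b) hypothesis $u_{\overline{X}}\neq pred_{T_X}(u_X)$ rules out $\overline{Dir}$, determining $X$'s hit uniquely. This yields four configurations, namely $Dir=N$ ($\overline{X}$: $N$-hit, $X$: $E$-hit), $Dir=S$ ($S$-hit, $W$-hit), $Dir=E$ ($E$-hit, $S$-hit), $Dir=W$ ($W$-hit, $N$-hit). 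Reading Table \ref{tab-decision-unknown} through the predecessor neighbor shows that an $S$-hit always gives action $I$, an $N$-hit always gives $II$, an $E$-hit gives $I$ precisely when the $N$-neighbor of its hit node lies on the trajectory (and $II$ otherwise), and a $W$-hit gives $II$ precisely when the $S$-neighbor does (and $I$ otherwise). For $Dir\in\{N,S\}$ the only ambiguous hit is the one made by $X$, and its decisive free neighbor is exactly $u_{\overline{X}}$, which lies on $T_X$ (it is the non-predecessor neighbor $v_X$ supplied by Lemma \ref{lem-hitting-region}); this pins $X$'s action to the value opposite that of $\overline{X}$, so the agents choose different actions by $\max(t_{X,3},t_{\overline{X},3})$.

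The main obstacle is the remaining subcase $Dir\in\{E,W\}$, where the ambiguous hit belongs to $\overline{X}$ and its decisive free neighbor (the $N$-neighbor for $Dir=E$, the $S$-neighbor for $Dir=W$) is not one of the two distinguished nodes, so adjacency does not immediately fix its status. Here I must show that $\overline{X}$'s hit occurs at a corner of its spiral, forcing the free neighbor off $T_{\overline{X}}$ and hence $\overline{X}$'s action opposite to $X$'s. The plan is to argue by contradiction: a mid-segment hit would mean $\overline{X}$ had already visited that free neighbor a full loop earlier, so $\overline{X}$ must have spiralled through a large region, and a counting argument on the visiting rounds (using that $u_{\overline{X}}=v_X$ and the two-neighbor structure guaranteed by Lemma \ref{lem-hitting-region}, together with Fact \ref{fact-counterclockwise}) shows this region already overlaps the early portion of $X$'s trajectory, so $X$ would have made a hit before reaching $u_X$ --- contradicting minimality of its first hit. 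Converting this ``large region forces an earlier hit'' intuition into a clean inequality on the hit rounds is the delicate step of the proof.
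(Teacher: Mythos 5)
Your dichotomy (split on whether the symmetric relation $u_{\overline{X}}=pred_{T_X}(u_X)$ also holds) is viable, and two of its three pieces are sound and essentially mirror the paper's argument: your Case (a) is the paper's treatment of the doubly-symmetric situation (equal hit rounds, opposite hit directions, and the parity clash between the two {\tt PredecessorProbe} waiting times via parts ix)--xii) of Fact \ref{fact-counterclockwise}, producing rendezvous in the second part), and your table-reading in Case (b) for $Dir\in\{N,S\}$ is correct, with the decisive free neighbor of $u_X$ being exactly $u_{\overline{X}}\in T_X$. The genuine gap is your Case (b) with $Dir\in\{E,W\}$: there you only announce a plan --- show the hit of $\overline{X}$ is at a spiral corner, via a ``large region forces an earlier hit'' counting argument --- and you yourself flag the conversion of this intuition into an inequality as the unperformed delicate step. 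As written, the lemma is proved only on part of the case space, and the missing part is not routine.

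The clean way to close the hole --- and what the paper actually does --- is to show that these configurations are \emph{vacuous} rather than to analyze them. The paper splits on $|T_{\overline{X}}|$ instead: when $|T_{\overline{X}}|>1$, its Claim \ref{claim-direction-overline-X} proves that $u_X=pred_{T_{\overline{X}}}(u_{\overline{X}})$ forces $\overline{X}$'s hit direction to be exactly opposite to $X$'s, by excluding the three alternatives (the same direction as $X$ is impossible because $X$ never visited the target node; the clockwise rotation of $X$'s direction makes the target non-domestic for $X$ by parts i)--iv) of Fact \ref{fact-counterclockwise}; the counterclockwise rotation forces, via the spiral's shape and $|T_{\overline{X}}|>1$, the node $pred_{T_X}(u_X)$ to be domestic for $\overline{X}$, contradicting that it is domestic for $X$). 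Opposite directions mean $u_{\overline{X}}=pred_{T_X}(u_X)$, i.e., your Case (a). Hence your Case (b) can only occur with $|T_{\overline{X}}|=1$, where $u_{\overline{X}}$ is the endpoint of the spiral's very first move, which is North, so $Dir=N$ is the only realizable configuration. Note that in all three of your non-$N$ configurations ($Dir\in\{S,E,W\}$), $\overline{X}$'s direction is precisely the counterclockwise rotation of $X$'s, i.e., exactly the branch killed by the paper's claim; so your ``main obstacle'' cases never arise, and the direct combinatorial attack you sketch can be replaced by this local three-direction exclusion with no global round bookkeeping. (A minor additional point: in the surviving $Dir=N$ case with $t_X=t_{\overline{X}}$, rendezvous pre-empts the third part during the probe; this falls under the lemma's first disjunct, as in the paper's Case B, but your write-up should say so explicitly.)
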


\begin{proof}
We consider two cases: either $|T_{\overline{X}}|>1$ or $|T_{\overline{X}}|= 1$.

\noindent
{\bf Case A.  $|T_{\overline{X}}|> 1$.}

For any $Dir \in \{N, W, S, E\}$, let $\hat{Dir}$ denote the opposite direction of $Dir$.
For instance, if $Dir=N$, then $\hat{Dir}=S$.
We denote by $CWT(Dir)$ (resp. $CCWT(Dir)$) the direction obtained by rotating $Dir$ by $\pi/2$  clockwise (resp. counterclockwise).
For example, $CWT(N)=E$ and $CCWT(N)=W$.
Hence, $CWT(CWT(Dir))=CCWT(CCWT(Dir))=\hat{Dir}$.
We start with the following claim.

\begin{claim}
	\label{claim-direction-overline-X}
	If agent $X$, where $X\in\{B, R\}$, moves in direction $Dir$ in round $t_X$ and agent ${X}$ makes a hit at $pred_{T_{\overline{X}}}(u_{\overline{X}})$, then agent $\overline{X}$ moves in direction $\hat{Dir}$ in round $t_{\overline{X}}$.
\end{claim}

In order to prove the claim, notice that,
in round $t_{\overline{X}}$, agent $\overline{X}$ moves from $pred_{T_{\overline{X}}}(u_{\overline{X}})$ (which is equal to $u_X$) to a node $v$ that is one of the four neighbors of $u_X$, and makes a hit at node $v$.
If agent $\overline{X}$ arrived at $v$ by a $Dir$-move, then $v$ would not be visited by agent $X$, and agent $\overline{X}$ could not make a hit at node $v$.
Therefore, agent $\overline{X}$ cannot move in direction $Dir$ in round $t_{\overline{X}}$.
Assume that agent $\overline{X}$ moves in direction $CCWT(Dir)$ and visits $v$ in round $t_{\overline{X}}$.
In view of Procedure {\tt Squarespiral}, the $CCWT(CCWT(Dir))$-neighbor of $pred_{T_{\overline{X}}}(u_{\overline{X}})$ must be on $T_{\overline{X}}$, and must be domestic for agent $\overline{X}$, since $|T_{\overline{X}}|>1$.
The $CCWT(CCWT(Dir))$-neighbor of $pred_{T_{\overline{X}}}(u_{\overline{X}})$ is $pred_{T_{{X}}}(u_{{X}})$.
As we know, $pred_{T_{{X}}}(u_{{X}})$ is domestic for $X$.
This is a contradiction which proves that agent $\overline{X}$ cannot move in direction $CCWT(Dir)$ in round $t_{\overline{X}}$.
If agent $\overline{X}$ moved in direction $CWT(Dir)$ and visited $v$ in round $t_{\overline{X}}$, then $v$ would not be domestic for agent $X$, in view of parts $i)-iv)$ of Fact \ref{fact-counterclockwise}.
As a result, agent $\overline{X}$ could not make a hit at node $v$.
Thus, agent $\overline{X}$ cannot move in direction $CWT(Dir)$ in round $t_{\overline{X}}$.
It follows that agent $\overline{X}$ can only move in direction $\hat{Dir}$ in round $t_{\overline{X}}$. $\diamond$

If agent ${X}$ moves in direction $Dir$ in round $t_X$ and makes a hit at $pred_{T_{\overline{X}}}(u_{\overline{X}})$, i.e., $u_{{X}}=pred_{T_{\overline{X}}}(u_{\overline{X}})$, then agent $\overline{X}$ moves in direction $\hat{Dir}$ in round $t_{\overline{X}}$, from $u_{{X}}$ to $pred_{T_{{X}}}(u_{{X}})$, in view of Claim \ref{claim-direction-overline-X}.
Thus, we have $u_{\overline{X}}=pred_{T_X}(u_X)$ and $u_X=pred_{T_{\overline{X}}}(u_{\overline{X}})$.
Next, we show that both hits are made in the same round.

\begin{claim}
	\label{claim-pred-pred}
	If agent $X$ makes its first hit at $pred_{T_{\overline{X}}}(u_{\overline{X}})$ and agent $\overline{X}$ makes its first hit at $pred_{T_{{X}}}(u_{{X}})$, then both hits are made in the same round.
\end{claim}

	In order to prove the claim, notice that agent $X$ visits $pred_T(u_X)$ in round $t_X-1$ and agent $\overline{X}$ visits $pred_{\overline{T}}(u_{\overline{X}})$ in round $t_{\overline{X}}-1$.
	If agent $X$ makes a hit at $pred_{T_{\overline{X}}}(u_{\overline{X}})$ and agent $\overline{X}$ makes a hit at $pred_{T_{{X}}}(u_{{X}})$, then it follows that $u_X=pred_{\overline{T}}(u_{\overline{X}})$ and  $u_{\overline{X}}=pred_T(u_X)$.
	Since $u_X$ is domestic for $\overline{X}$, $\overline{X}$ arrives at $u_X$ for the first time before $X$ does, and thus we have $t_{\overline{X}}-1\le t_{X}$.
	Symmetrically, $u_{\overline{X}}$ is domestic for $X$, which implies that $X$ arrives at $u_{\overline{X}}$ for the first time before $\overline{X}$ does.
	So, we have $t_X-1\le t_{\overline{X}}$.
	Overall, we have $t_X-1\le t_{\overline{X}}\le t_X+1$.
	If $t_X-1=t_{\overline{X}}$, then both agents $X$ and $\overline{X}$ would meet at node $u_{\overline{X}}$ in round $t_{\overline{X}}$, and the algorithm would end in its first part.
	Similarly, if $t_{\overline{X}}-1=t_X$, then both agents would meet at node $u_X$ in round $t_X$,
	and the algorithm would end in its first part.
	Hence, we may assume that $t_X-1< t_{\overline{X}}< t_X+1$, and thus $t_X=t_{\overline{X}}$.
	This proves the claim. $\diamond$
	
		In view of Claim \ref{claim-pred-pred}, both agents start the second part of the algorithm in the same round $t_X=t_{\overline{X}}$.
		In view of Claim \ref{claim-direction-overline-X}, both agents move in opposite directions in round $t_X$.
		If both agents move horizontally in round $t_X$, i.e., one moves East and the other one moves West, then agent $X$ stays at $u_{X}$ for $T_X.height$ rounds and agent $\overline{X}$ stays at $u_{\overline{X}}$ for $T_{\overline{X}}.height$ rounds, in view of Procedure {\tt PredecessorProbe}.
		Note that $T_X.height$ is different from $T_{\overline{X}}.height$, since one of them is even and the other one is odd, in view of parts $ix)$ and $xi)$ of Fact \ref{fact-counterclockwise}.
		Therefore, both agents that started the second part simultaneously and waited for different periods of time, start moving in different rounds in the second part of Algorithm {\tt Hardest Scenario}.
		The agent that moves first finds the other agent after one round, since both agents stay at nodes joined by an edge and move on this edge.
		In the case when both agents move vertically in round $t_X=t_{\overline{X}}$, the proof is similar.
		Therefore, rendezvous is achieved in the second part of the algorithm.
		This completes the proof for Case A.

\noindent
{\bf Case B.  $|T_{\overline{X}}|= 1$.}

Since $|T_{\overline{X}}|= 1$, trajectory $T_{\overline{X}}$  contains only two nodes: the base of agent $\overline{X}$ and its North neighbor, denoted by $z$.
	Note that if both agents visited node $z$ for the first time in the same round, then rendezvous would happen and Algorithm {\tt Hardest Scenario} would be interrupted immediately.
	Hence, neither $\overline{X}$ would make a hit nor $T_{\overline{X}}$ would be defined.
	Thus we may assume that $z=u_{\overline{X}}$ and $z$ is domestic for agent $X$. Moreover, 
	the base of agent $\overline{X}$ is $pred_{T_{\overline{X}}}(u_{\overline{X}})$ and $pred_{T_{\overline{X}}}(u_{\overline{X}})=u_X$.
	We now establish the direction in which agent $X$ may move in round $t_X$.
	
	\begin{claim}
		\label{claim-direction-of-X}
Agent $X$ moves either South or East in round $t_X$.
	\end{claim}
	
In order to prove the claim, we show that agent $X$ can neither move North nor West in round $t_X$.
Since agent $X$ makes a hit at $pred_{T_{\overline{X}}}(u_{\overline{X}})$ in round $t_X$, if agent $X$ moved North in round $t_X$, then all nodes that are above $pred_{T_{\overline{X}}}(u_{\overline{X}})$ would not be on the trajectory $T_X$, including the North neighbor of $pred_{T_{\overline{X}}}(u_{\overline{X}})$, which is $u_{\overline{X}}$.
This contradicts the fact that $u_{\overline{X}}$ is domestic for $X$, so agent $X$ cannot move North in round $t_X$.
If agent $X$ moved West in round $t_X$, then $u_{\overline{X}}$ would not be on $T_X$, in view of part $iv)$ of Fact \ref{fact-counterclockwise}.
Therefore, agent $X$ cannot move West in round $t_X$, either.
This proves the claim. $\diamond$.

In view of Claim \ref{claim-direction-of-X}, agent $X$ moves either South or East in round $t_X$.
If it moves South, then $u_X =pred_{T_{\overline{X}}}(u_{\overline{X}})$ and $u_{\overline{X}}=pred_{T_{{X}}}(u_{{X}})$.
Since $|T_{\overline{X}}|= 1$, we must have $|T_X|>1$. Hence, in view of Case A applied to agent $X$ (instead of agent $\overline{X}$)  we conclude that rendezvous is achieved in the second part of the algorithm because $u_{\overline{X}}=pred_{T_{{X}}}(u_{{X}})$.

Next, we consider the case when agent $X$ moves East in round $t_X$.
After making a hit in round $t_{\overline{X}}$, agent $\overline{X}$ starts executing the second part of the algorithm.
	In view of Procedure {\tt PredecessorProbe}, agent $\overline{X}$ moves to $pred_{T_{\overline{X}}}(u_{\overline{X}})$ in round $t_{\overline{X}}+1$, since $T_{\overline{X}}$.width is $0$.
	If it does not meet agent $X$ there, then it moves back to $u_{\overline{X}}$ in round $t_{\overline{X}}+2$.
	Then it starts the execution of the third part of the algorithm.
	Thus, we have $t_{\overline{X}, 3}=t_{\overline{X}}+3$.
	As only the $S$-neighbor of $u_{\overline{X}}$ is on $T_{\overline{X}}$ and the other three neighbors are not, agent $\overline{X}$ chooses action $II$ (in view of Table \ref{tab-decision-unknown}). 
	
	Observe that $t_X\ge t_{\overline{X}}$; otherwise, rendezvous would have happened even before agent $\overline{X}$ was woken up by the adversary.
	As agent $X$ moves East in round $t_X$, it follows that $T_X.height\ge 2$.
	Agent $X$ stays at $u_X$ (which is $pred_{T_{\overline{X}}}(u_{\overline{X}})$) from round $t_X+1$ to round $t_X+T_X.height$, in view of Procedure {\tt PredecessorProbe}.
	If $t_X=t_{\overline{X}}$, then agent $X$ is found by agent $\overline{X}$ in round $t_{\overline{X}}+1$, while staying at $u_X$, which means that rendezvous occurs in round $t_{\overline{X}}+1$.
	Otherwise, agent $X$ moves to node $pred_{T_X}(u_X)$ in round $t_X+T_X.height+1$ and moves back to $u_X$ in round $t_X+T_X.height+2$, in view of Procedure {\tt PredecessorProbe}.
	Then, it starts executing the third part of the algorithm.
	Thus, we have $t_{X, 3}=t_X+T_X.height+3$, which is larger than $t_{\overline{X}, 3}$.
	Agent $X$ chooses action $I$, in view of Table \ref{tab-decision-unknown}, as the only neighbors of $u_X$ on $T_X$ are the $W$-neighbor and the $N$-neighbor.
	Therefore, if agent $X$ moves East in round $t_X$, then either rendezvous occurs in a round before $t_{\overline{X}, 3}$ or agents have chosen different actions by round $t_{{X}, 3}$.
	This completes the proof for Case B.
\end{proof}

In Lemma \ref{theorem-ends-in-2nd-part}, we showed that, if $u_{{X}}= pred_{T_{\overline{X}}}(u_{\overline{X}})$, for some $X\in\{B, R\}$, then either rendezvous is achieved, or agents choose different actions. In the next theorem, we consider the complementary condition, which is $u_{{X}}\neq pred_{T_{\overline{X}}}(u_{\overline{X}})$ and $u_{\overline{X}} \neq pred_{T_{{X}}}(u_{{X}})$.

\begin{lemma}
	\label{theorem-no-pred-diff-actions}
If $u_{{X}}\neq pred_{T_{\overline{X}}}(u_{\overline{X}})$ and $u_{\overline{X}} \neq pred_{T_{{X}}}(u_{{X}})$, then both agents choose different actions in the third part of Algorithm {\tt Hardest Scenario}.
\end{lemma}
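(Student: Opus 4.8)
The plan is to reduce the claim to a case analysis on the direction $Dir_X$ in which each agent $X$ travels in its hitting round $t_X$, to show how this direction (plus, in the horizontal cases, one auxiliary neighbour) fixes the row of Table~\ref{tab-decision-unknown} that the agent reads, and then to use Lemma~\ref{lem-hitting-region} together with the complementary hypothesis to pin the two directions down so that the two rows always yield opposite actions.

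First I would build a dictionary from $Dir_X$ to the input of $X$. By Procedure {\tt Squarespiral}, $pred_{T_X}(u_X)$ is the $\hat{Dir_X}$-neighbour of $u_X$, so that neighbour lies on $T_X$; the $Dir_X$-neighbour is the would-be successor and is not on $T_X$; and by parts i)--iv) of Fact~\ref{fact-counterclockwise} the outward ($CWT(Dir_X)$-)neighbour is not on $T_X$ either. Only the inward ($CCWT(Dir_X)$-)neighbour is undetermined. Matching the four resulting patterns against Table~\ref{tab-decision-unknown} gives a clean dictionary: an $N$-hit always produces action $II$ and an $S$-hit always produces action $I$ (the inward neighbour being irrelevant), while an $E$-hit produces $I$ iff the $N$-neighbour of $u_X$ is on $T_X$, and a $W$-hit produces $I$ iff the $S$-neighbour of $u_X$ is not on $T_X$. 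Thus vertical hits are decisive and horizontal hits hinge on a single inward neighbour.

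Second I would determine, for each $Dir_X$, the direction of $\overline X$. Because $u_X$ is domestic for $\overline X$ and $u_{\overline X}$ for $X$, the complementary hypothesis lets me invoke Lemma~\ref{lem-hitting-region}: $u_{\overline X}$ lies in the late part of $T_X$ strictly before $pred_{T_X}(u_X)$, i.e. in the $[c,d]$ segment (one-neighbour case) or in $[d,v_X]$ (two-neighbour case). Re-running the direction bookkeeping from the proof of that lemma, via parts v)--viii) of Fact~\ref{fact-counterclockwise}, then forces $Dir_{\overline X}$. A useful simplification is that the branch ``$\overline X$ hits at $v_X$ moving towards $u_X$'' would give $u_X=pred_{T_{\overline X}}(u_{\overline X})$, which the complementary hypothesis forbids; this prunes one option in every two-neighbour case. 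I would then tabulate the surviving pairs $(Dir_X,Dir_{\overline X})$ and read the actions off the dictionary. For example, in the one-neighbour case with $Dir_X=E$ the pruning leaves $Dir_{\overline X}=S$, so $X$ reads $II$ and $\overline X$ reads $I$; the analogous pruning for the other three choices of $Dir_X$, together with the decisive vertical entries of the dictionary, disposes of every configuration in which at least one hit is vertical.

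The remaining, and hardest, situation is when both hits are horizontal: the two-neighbour $E$-case, where the pruning above forces $\overline X$ to make a $W$-hit at a node of $[d,v_X)$ while $X$ reads action $I$. Here the action of $\overline X$ is governed by its inward ($S$-)neighbour, so I must prove that the $S$-neighbour of $u_{\overline X}$ lies on $T_{\overline X}$, forcing $\overline X$ to read action $II$. The plan is to exploit that $u_X$, being domestic for $\overline X$, is the $S$-neighbour of $v_X$ and already lies on an inner turn of $\overline X$'s spiral, and that $T_X$ and $T_{\overline X}$ are congruent (translates of the same square spiral); combining this with the parity constraints of parts ix)--xii) of Fact~\ref{fact-counterclockwise} should show that $u_{\overline X}$ cannot be the extreme node of its West-going side on $T_{\overline X}$, so that its inward neighbour is inherited from the previous ring and is therefore on $T_{\overline X}$. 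Carrying out this last geometric argument, and its mirror images for the symmetric horizontal configurations, is the crux of the proof; every configuration involving a vertical hit follows immediately from the decisive part of the dictionary.
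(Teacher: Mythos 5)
Your first step is sound: the dictionary from the hit direction $Dir_X$ to the row of Table~\ref{tab-decision-unknown} is correct and is essentially the paper's Claim~\ref{claim-num-neighbors} (vertical hits decide the action outright; horizontal hits hinge on the single inward neighbour). The proof breaks at the second step, where you assert that Lemma~\ref{lem-hitting-region} plus the hypothesis ``forces $Dir_{\overline{X}}$''. The lemma constrains only \emph{where} $\overline{X}$ can hit (at $pred_{T_X}(u_X)$, excluded by hypothesis, or on a segment of the late part of $T_X$); it says nothing about the direction in which $\overline{X}$ arrives there. Your sample deduction already fails: in the one-neighbour $E$-case the segment between $c$ and $d$ is the East side of the inner ring, and the column immediately East of it is unvisited by $X$ (it belongs to the $N$-leg that $X$ would have traversed only after the hit), so $\overline{X}$ may enter that segment moving West, not only moving South. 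A $W$-hit whose $S$-neighbour lies on $T_{\overline{X}}$ reads row~1 of the table and yields action~II --- the same action as $X$ in that case. So your tabulation of pairs $(Dir_X,Dir_{\overline{X}})$ is incomplete, and the omitted pairs are exactly the dangerous same-action ones; in addition, the one configuration you do flag as hard (both hits horizontal) is explicitly left as a plan (``carrying out this last geometric argument \dots is the crux''), not carried out.

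The missing idea is the \emph{symmetric} application of Lemma~\ref{lem-hitting-region}: you invoke it only to place $u_{\overline{X}}$ on the late part of $T_X$, never to place $u_X$ on the late part of $T_{\overline{X}}$. That second application is what the paper's proof runs on. Its Claim~\ref{claim-relative-positions-of-hitting} converts each table row, under the complementary hypothesis, into a relative position of $u_X$ with respect to $u_{\overline{X}}$ (rows 1 and 4, the action-II rows: $u_X$ right or North of $u_{\overline{X}}$; rows 2 and 3, the action-I rows: left or South), and applying this claim to \emph{both} agents shows the two agents cannot both read action-II rows, nor both read action-I rows. This positional duality disposes of all direction pairs uniformly --- including your hard both-horizontal case --- with no enumeration of $Dir_{\overline{X}}$ and no parity or congruence-of-spirals argument; the parity statements ix)--xii) of Fact~\ref{fact-counterclockwise} are used in the paper only in Lemma~\ref{theorem-ends-in-2nd-part}, i.e., precisely in the predecessor case that the present lemma's hypothesis excludes. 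As written, your proposal correctly decodes the decision table but does not prove the lemma.
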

\begin{proof}
First, we prove that the only possible inputs obtained by an agent at the end of the second part of Algorithm {\tt Hardest Scenario} are those  given in Table \ref{tab-decision-unknown}.
\begin{claim}
	\label{claim-num-neighbors}
	Let $\ell_X$, for  $X\in\{B, R\}$, be the number of neighbors of $u_X$ that are on the trajectory $T_X$. Then $1\leq \ell_X \leq 2$.
\end{claim}

In order to prove the claim, notice that $pred_{T_X}(u_X)$ is always on $T_X$, 
since $|T_X|>0$.
Therefore, $\ell_X\ge 1$.
Let $Dir$ denote the direction in which agent $X$ moves in round $t_X$.
In view of parts $i)-iv)$ of Fact \ref{fact-counterclockwise}, the $CWT(Dir)$-neighbor of $u_X$ is not on $T_X$
(where $CWT(Dir)$ denotes the direction obtained by rotating $Dir$ by $\pi/2$  clockwise).
Furthermore, the $Dir$-neighbor of $u_X$ is never on $T_X$. since $u_X$ has four neighbors, we have $\ell_X\le 2$.
This proves the claim. $\diamond$

If $u_X$ has exactly two neighbors on the trajectory $T_X$, it follows that both neighbors cannot be on the same line, in view of Procedure {\tt Squarespiral}.
In view of Claim \ref{claim-num-neighbors}, Table \ref{tab-decision-unknown} enumerates all the possible inputs that agent $X$ might see.
Recall that each row of Table \ref{tab-decision-unknown} corresponds to two different inputs.
For any $1 \leq j \leq 4$, define  $inputs (j)$ to be the pair of inputs listed in row $j$ of the table.
Using the assumption $u_{\overline{X}} \neq pred_{T_{{X}}}(u_{{X}})$, we can correctly infer the relative positions between $u_X$ and $u_{\overline{X}}$ from these inputs. The next claim follows from Lemma \ref{lem-hitting-region}.

\begin{claim}
	\label{claim-relative-positions-of-hitting}
	If $u_{\overline{X}} \neq pred_{T_{{X}}}(u_{{X}})$, then
	the relative positions between $u_X$ and $u_{\overline{X}}$ can be determined as follows.
	
	\begin{itemize}
		\item[] $inputs(1)$ : $u_X$ is either right or North of $u_{\overline{X}}$.
		\item[] $inputs(2)$ : $u_X$ is either left or South of $u_{\overline{X}}$.
		\item[] $inputs(3)$ : $u_X$ is left of $u_{\overline{X}}$.
		\item[] $inputs(4)$ : $u_X$ is right of $u_{\overline{X}}$. $\diamond$
	\end{itemize}
\end{claim}

Suppose that agent $X$ got one of the inputs noted in $inputs(1)$ or $inputs(4)$. 
Agent $X$ chooses action $II$, in view of Table \ref{tab-decision-unknown}.
Since $u_{\overline{X}} \neq pred_{T_{{X}}}(u_{{X}})$, $u_X$ is either right or North of $u_{\overline{X}}$, in view of Claim \ref{claim-relative-positions-of-hitting}.
As a result, $u_{\overline{X}}$ can either be left or South of $u_X$; in other words, $u_{\overline{X}}$ can neither be right of $u_X$ nor North of $u_X$.
Since $u_{{X}}\neq pred_{T_{\overline{X}}}(u_{\overline{X}})$, neither $inputs(1)$ nor $inputs(4)$ apply to agent $\overline{X}$, in view of Claim \ref{claim-relative-positions-of-hitting}.
In view of Claim \ref{claim-num-neighbors}, agent $\overline{X}$ can only get one of the inputs noted in Table \ref{tab-decision-unknown}.
Regardless of whether agent $\overline{X}$ gets $input(2)$ or $input(3)$, it always chooses action $I$.
Therefore, agents $X$ and $\overline{X}$ choose different actions.

Suppose that agent $X$ got one of the inputs noted in $inputs(2)$ or $inputs(3)$. 
Agent $X$ chooses action $I$, in view of Table \ref{tab-decision-unknown}.
Since $u_{\overline{X}} \neq pred_{T_{{X}}}(u_{{X}})$, $u_X$ is either left or South of $u_{\overline{X}}$, in view of Claim \ref{claim-relative-positions-of-hitting}.
As a result, $u_{\overline{X}}$ can either be right or North of $u_X$; in other words, $u_{\overline{X}}$ can neither be left of $u_X$ nor South of $u_X$.
Since $u_{{X}}\neq pred_{T_{\overline{X}}}(u_{\overline{X}})$, neither $inputs(2)$ nor $inputs(3)$ apply to agent $\overline{X}$, in view of Claim \ref{claim-relative-positions-of-hitting}.
In view of Claim \ref{claim-num-neighbors}, agent $\overline{X}$ can only get one of the inputs noted in Table \ref{tab-decision-unknown}.
Regardless of whether agent $\overline{X}$ gets $input(1)$ or $input(4)$, it always chooses action $II$.
Therefore, agents $X$ and $\overline{X}$ choose different actions.

	Therefore, agents $X$ and $\overline{X}$ always choose different actions.
	This proves the lemma.
\end{proof}

As noted in the beginning of this section, if agents choose different actions in the third part of the algorithm then rendezvous occurs.
Hence, 
Lemmas \ref{theorem-ends-in-2nd-part} and  \ref{theorem-no-pred-diff-actions} imply the following theorem
that proves the correctness of Algorithm {\tt Hardest Scenario}.
\begin{theorem}
Algorithm {\tt Hardest Scenario} guarantees rendezvous even if the agents do not have any a priori knowledge and start with arbitrary delay.
\end{theorem}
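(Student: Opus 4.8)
The plan is to obtain the theorem as a short assembly step from Lemmas \ref{theorem-ends-in-2nd-part} and \ref{theorem-no-pred-diff-actions}, together with the high-level observation recorded at the start of Section \ref{sect-proof-no-assump}; essentially no new work is needed beyond combining these ingredients. First I would split the analysis according to the relationship between the two hit nodes $u_B$ and $u_R$. By definition these fall into exactly two complementary and exhaustive situations: either $u_X = pred_{T_{\overline{X}}}(u_{\overline{X}})$ holds for some $X\in\{B,R\}$, or else both $u_X \neq pred_{T_{\overline{X}}}(u_{\overline{X}})$ and $u_{\overline{X}} \neq pred_{T_X}(u_X)$ hold. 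These are precisely the hypotheses of the two lemmas, so it suffices to treat each situation in turn.

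In the first situation I would invoke Lemma \ref{theorem-ends-in-2nd-part}, which already guarantees that either rendezvous is achieved before either agent enters the third part of the algorithm, or both agents have selected different actions from Table \ref{tab-decision-unknown} by the time the later of them reaches the third part. In the second situation I would invoke Lemma \ref{theorem-no-pred-diff-actions}, which guarantees directly that the two agents choose different actions in the third part. Hence in every case the conclusion reduces to the same dichotomy: either rendezvous has already occurred, or the two agents select different actions.

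It then remains to explain why ``different actions'' forces rendezvous, which is the one place where I would appeal to the global structure of the algorithm rather than to the lemmas. Let $a_I$ and $a_{II}$ be the agents performing actions $I$ and $II$. Agent $a_{II}$ comes to rest forever at its hit node $u_{a_{II}}$, which by Claim \ref{claim-hitting-exlusive} lies on the main trajectory $T_{a_I}$ of $a_I$; meanwhile $a_I$ repeatedly executes Procedure {\tt Swing}, sweeping along $\hat{T}_{a_I}$ and then $T_{a_I}$ and thereby revisiting every node of $T_{a_I}$ -- in particular $u_{a_{II}}$ -- within each cycle. Consequently, once $a_{II}$ has stabilized, at most one further swing cycle of $a_I$ brings it to $u_{a_{II}}$, and the agents meet there. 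I expect no genuine obstacle in this step; the only points requiring care are checking that the two cases are truly exhaustive and align exactly with the hypotheses of the respective lemmas, and confirming (via Claim \ref{claim-hitting-exlusive}) that the resting node $u_{a_{II}}$ really lies on $T_{a_I}$, so that the swinging agent is certain to cross it.
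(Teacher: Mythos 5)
Your proposal is correct and takes essentially the same route as the paper, which also derives the theorem by combining Lemmas \ref{theorem-ends-in-2nd-part} and \ref{theorem-no-pred-diff-actions} over the same exhaustive dichotomy on whether $u_X = pred_{T_{\overline{X}}}(u_{\overline{X}})$ holds for some $X$, together with the observation at the start of Section \ref{sect-proof-no-assump} that different actions force rendezvous: $a_{II}$ rests at a node of $T_{a_I}$, and at most one further execution of Procedure {\tt Swing} by $a_I$ reaches it. Your appeal to Claim \ref{claim-hitting-exlusive} to confirm that the resting node lies on $T_{a_I}$ is sound and matches the fact the paper uses for the same purpose.
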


We conclude this section by analyzing the complexity of the algorithm.

\begin{theorem}\label{theorem-unknow}
	Suppose that the agents are placed at two nodes of the grid at a distance at most $D$.
	Then Algorithm {\tt Hardest Scenario} guarantees rendezvous in time $O(D^2)$.
\end{theorem}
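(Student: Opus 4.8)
The plan is to count time from the wakeup round of the later agent, which I set to round $0$; let $a_e$ and $a_l$ denote the agents woken earlier and later, with bases $b_e$ and $b_l$, and let $\delta\ge 0$ be the delay, so that $a_e$ starts Procedure {\tt Squarespiral} in round $-\delta$. The key preliminary observation, and what I expect to be the main obstacle, is that the length of $a_e$'s main trajectory is not a priori bounded in terms of $D$: if $\delta$ is enormous, $a_e$ can spiral out arbitrarily far before any mark of $a_l$ exists. I would dispose of this by a delay-reduction argument. Since $b_l$ lies at distance at most $D$ from $b_e$, it has Chebyshev distance at most $D$ from $b_e$, and Procedure {\tt Squarespiral} fills the $(2D{+}1)\times(2D{+}1)$ square centred at its origin within $\sum_{p=1}^{D}(8p-2)=4D^2+2D$ rounds; call this bound $\tau=O(D^2)$. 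Thus $a_e$ visits $b_l$ for the first time in some round $-\delta+\tau'$ with $\tau'\le\tau$. If $\delta\ge\tau'$, this visit occurs at or before round $0$, i.e., while $a_l$ is still asleep at $b_l$, and by the convention that an earlier agent reaching the later agent's base before its wakeup yields rendezvous time $0$, the theorem holds trivially. Hence I may assume $\delta<\tau'=O(D^2)$.

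Under this assumption I would next bound both main trajectories. Agent $a_l$ starts spiralling in round $0$ and within $\tau=O(D^2)$ rounds reaches $b_e$, which has been marked since round $-\delta<0$; so it makes a hit no later than then, giving $|T_{a_l}|=O(D^2)$. For $a_e$, its first visit to $b_l$ occurs in round $-\delta+\tau'\ge 0$ (as $\delta<\tau'$), when $b_l$ is already marked by $a_l$; hence $a_e$ makes a hit by that round, and the number of steps it has then taken is $|T_{a_e}|\le(-\delta+\tau')-(-\delta)=\tau'=O(D^2)$. Since a $k$-step prefix of Procedure {\tt Squarespiral} stays within Chebyshev distance $O(\sqrt{k})$ of its origin, each main trajectory $T_X$ satisfies $T_X.width,\,T_X.height=O(D)$. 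Consequently Procedure {\tt PredecessorProbe} runs for at most $\max(T_X.width,T_X.height)+2=O(D)$ rounds for each agent, so each agent reaches the third part of the algorithm no later than round $O(D^2)$ (its first hit occurs by round $O(D^2)$, and the second part adds only $O(D)$).

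Finally I would assemble the bound using the rendezvous mechanism described at the start of this subsection: once in the third part, the agent $a_{II}$ performing Action~II parks at a node $u_{a_{II}}$ lying on the main trajectory $T_{a_I}$ of the agent $a_I$ performing Action~I, while $a_I$ executes Procedure {\tt Swing}, which traverses all of $T_{a_I}$ in each period of $2|T_{a_I}|$ rounds. Both agents are in the third part by some round $O(D^2)$; from that point, within one swing, i.e.\ at most $2|T_{a_I}|=O(D^2)$ further rounds, $a_I$ stands on the stationary $a_{II}$ and rendezvous occurs. Summing, the elapsed time from round $0$ is $O(D^2)$, as claimed. I expect the delay-reduction step to be the crux of the argument: all remaining estimates are routine once $a_e$'s trajectory is known to be short, and they rest solely on the quadratic relationship between the number of steps of the square spiral and the Chebyshev radius it attains.
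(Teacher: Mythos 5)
Your proposal is correct and takes essentially the same route as the paper's proof: both arguments rest on the fact that the square spiral covers the Chebyshev ball of radius $D$ around the base within $O(D^2)$ steps, which bounds $|T_X|=O(D^2)$ and $T_X.width,\,T_X.height=O(D)$ for both agents, after which Procedure {\tt PredecessorProbe} contributes $O(D)$ rounds and at most one period of Procedure {\tt Swing} contributes $2|T_{a_I}|=O(D^2)$ rounds. Your explicit delay-reduction is handled implicitly in the paper via the standing assumption that rendezvous has not occurred before the third part (so the later agent must already be awake, and its base marked, when the earlier agent first reaches it), and your coverage time $\tau=4D^2+2D$ undercounts slightly --- after $D$ full iterations of the loop one boundary segment of the radius-$D$ square is still unvisited, and the paper's exact figure is $4D(D+1)$ --- but this affects only the constant, not the $O(D^2)$ bound.
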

\begin{proof}
	All three parts of Algorithm {\tt Hardest Scenario} are executed by agents sequentially.
	To analyze the worst case running time, we may assume that rendezvous does not occur until both agents start executing the third part of the algorithm.
	We begin with the following claim.
		\begin{claim}
		\label{claim-lenght-of-Trajectory}
		For any $X\in\{B, R\}$, we have
 	 $|T_X| \leq 4D(D+1)$,  $T_X.height\leq 2D$ and $T_X.width\leq 2D$. 
	\end{claim}
	
	To prove the claim, let $R(v, \delta)$, for any even integer $\delta$, denote the set of nodes on the boundary of the square with center at node $v$ and side-length (i.e., the number of edges) $\delta$.
	In particular, the set $R(v, 0)$ is a singleton that contains only node $v$, and the set $R(v, \delta)$ is empty if $\delta<0$.
	The following observations are crucial:
	\begin{itemize}
\item The set $R(v, \delta)$ contains $\max(1, 4\delta)$ nodes.
\item The set $R(v, \delta_1) \cap R(v, \delta_2)$ is empty, if $\delta_1\ne \delta_2$.
\item The four neighbors of any node in $R(v, \delta)$ are in the set $R(v, \delta-2) \cup R(v, \delta)\cup R(v, \delta+2)$.
\item Any node that is at a distance at most $\ell$ from $v$ is in the set $\cup_{i=0}^{\ell} R(v, 2i)$, for any integer $\ell\ge 0$.
	\end{itemize}
	We only prove the last observation, as all the other observations are straightforward.
	The statement is straightforward for $\ell=0$.
	Next, we assume that the statement is true for $\ell=k$ and prove it for $\ell=k+1$.
	Let $s$ denote any node that is at distance $(k+1)$ from $v$, let $\pi$ denote any one of the shortest paths from $v$ to $s$ and let $s'$ denote the neighbor of $s$, that is on $\pi$.
	Since the distance between $v$ and $s'$ is exactly $k$, it follows that $s'\in \cup_{i=0}^{k} R(v, 2i)$, in view of the inductive hypothesis.
	Let $k'$ denote the integer such that $s'\in R(v, 2k')$.
	Obviously, $k'\le k$, and in view of the second observation, $k'$ is unique. 
	Node $s$, as a neighbor of $s'$, is in the set $R(v, k'-2) \cup R(v, k')\cup R(v, k'+2)$, in view of the third observation.
	Hence, $s$ is in the set  $R(v, k'+2) \cup \cup_{i=0}^{k} R(v, 2i)$, which is a subset of $\cup_{i=0}^{k+1} R(v, 2i)$.
	This concludes the proof of the last observation by induction.
	
	Let $b_X$ and $b_{\overline{X}}$ denote the bases of agents $X$ and $\overline{X}$, respectively.
	In view of Procedure {\tt Squarespiral}, agent $X$, after being woken up by the adversary, visits sequentially nodes in $R(b_X, 0)$, $R(b_X, 2)$, $R(b_X, 4)$, $R(b_X, 6)$, and so on, until it makes the first hit.
	Since the distance between $b_X$ and $b_{\overline{X}}$ is at most $D$, node $b_{\overline{X}}$ is in the set $\cup_{i=0}^D R(b_X, 2i)$, in view of the last observation.
	By the round in which agent $X$ completes visiting for the first time all the nodes in $R(b_X, 2D)$, it has already made a hit, either at node $b_{\overline{X}}$ or before visiting $b_{\overline{X}}$.
	Therefore, we have $T_X.height\le 2D$ and $T_X.width\le 2D$.
	In view of the first observation, $|T_X|$, i.e., the number of edges in $T_X$, is at most
	$$|\cup_{i=0}^D R(b_X, 2i)|-1=4\times 2+4\times 4+4\times6+\cdots+4\times 2D=8\times\sum_{i=1}^{D} i=4D(D+1).$$
	This proves the claim. $\diamond$

%
	Next, we analyze the running times of Procedures {\tt Squarespiral} and {\tt PredecessorProbe}, respectively.
	In each move of Procedure {\tt Squarespiral}, the agent visits a new node, so the number of traversed edges, is
	equal to the number of rounds spent on Procedure {\tt Squarespiral}.
	In view of Claim \ref{claim-lenght-of-Trajectory}, Procedure {\tt Squarespiral} uses at most $4D(D+1)$ rounds.
	Procedure {\tt PredecessorProbe} uses no more than $\max(T_X.height, T_X.width)+2$ rounds, which is bounded by $2D+2$, in view of Claim \ref{claim-lenght-of-Trajectory}.
	
	Let $t$ denote the earliest round by which both agents have already started the third part of the algorithm.
	It follows that $t\le 4D(D+1)+2D+2$.
	Without loss of generality, assume that agent $X$ chooses action $I$ and agent $\overline{X}$ chooses action $II$ in the third part of the algorithm.
	By round $t$, agent $\overline{X}$ has visited node $u_{\overline{X}}$ and stays there forever.
Since $u_{\overline{X}}$ is domestic for agent $X$, it is on the trajectory $T_X$.
After agent  $\overline{X}$ stabilizes at $u_{\overline{X}}$, at most one execution of Procedure {\tt Swing} performed by agent $X$ guarantees that rendezvous occurs.
	Hence, agent $X$ must find agent $\overline{X}$ at node $u_{\overline{X}}$ by round $t+2|T_X|\le12D^2+14D+2$.
	Therefore, the time of Algorithm {\tt Hardest Scenario} is at most $12D^2+14D+2 \in O(D^2)$.
\end{proof}

\section{Discussion}

In our model, we made the following two assumptions. We assumed that the grid is oriented, i.e., that the port numbers at each node are  labeled $N, E, S, W$ in a coherent way. This is equivalent to assuming that the grid is embedded in the plane with all edges going North/South or East/West, and that there are no port numbers at nodes but agents have correct compasses indicating the cardinal directions. The second assumption was that the time of rendezvous is counted starting in the wake-up round of the later agent. Below we discuss these two assumptions.

First, consider the assumption about correct compasses. It is easy to see that if this assumption is removed then rendezvous is impossible in many cases. Consider the grid embedded in the plane with all edges of length 1 and going North/South or East/West, without port numbers. Suppose that one agent has a compass showing North in the East direction and the other has a compass showing North in the West direction. Suppose that the two agents are placed by the adversary at two nodes $u$ and $v$ at distance 2 in the grid, not on the same line of the grid. Let $p$ be the center of the segment $[u,v]$. ($p$ is of course not a node of the grid.) The adversary wakes up both agents simultaneously.
It is easy to prove by induction on the round number that, regardless of which deterministic algorithm is applied,  in each round, the agents visit nodes that are central-symmetric with respect to point $p$, and, at the time of the visit 
these nodes are either both marked or both unmarked (the latter condition has to be added to carry out the induction). Hence the agents will never meet.

Next, consider the assumption that the time of rendezvous is counted starting in the wake-up round of the later agent. An alternative would be to count time starting in the wake-up round of the earlier agent. How would this change affect our results? Of course, in the scenario with simultaneous start, there would be no change. Consider the two scenarios with arbitrary delay, and suppose that the initial distance between the agents is at most $D$. Regardless of whether agents know $D$ or not, if the adversary chooses not to wake up the later agent until it is found by the earlier agent, the (worst-case) time of rendezvous must be  $\Omega(D^2)$ because the adversary can choose as the base of the later agent the last node at a distance at most $D$ from the base of the earlier agent, visited by this agent, and there are $\Omega(D^2)$ such nodes. On the other hand, it is easy to see that in our rendezvous algorithm working for the most difficult scenario, the delay between the wake-up of the earlier and later agents is $O(D^2)$ because within this number of rounds the earlier agent visits all nodes at a distance at most $D$ from its base, and hence would catch the later agent if it had not been yet woken up at the time of the visit of its base by the earlier agent. As we proved that our rendezvous algorithm for this most difficult scenario guarantees time $O(D^2)$ since the wakeup of the later agent, the above remark shows that our algorithm
provides a matching upper bound $O(D^2)$ on the complexity of rendezvous if time is counted from the wake-up of the earlier agent.

\section{Conclusion}

We considered three scenarios for deterministic rendezvous of agents leaving traces in the infinite oriented grid.
In the first scenario, agents know an upper bound $D$ on their initial distance but may start with arbitrary delay. In the second scenario, they 
start simultaneously but do not have any {\em a priori} knowledge. In the third scenario, agents start with arbitrary delay and they do not have any 
{\em a priori} knowledge.

For the first two scenarios we provided rendezvous algorithms with complexity linear in $D$ which is obviously optimal. In the third scenario we provided an algorithm with complexity $O(D^2)$ and showed that there is no algorithm with complexity $o(D^{\sqrt{2}})$. While the negative result shows a separation between the optimal complexity in the two easier scenarios from the optimal complexity in the most difficult scenario, finding this optimal complexity is a natural open problem resulting from our research.

Future research on this topic could include studying feasibility and complexity of deterministic rendezvous of anonymous agents leaving traces, in arbitrary connected graphs.
It is easy to see that, in some situations, symmetry cannot be broken and thus rendezvous is impossible. Such is the case, e.g., when agents start simultaneously from antipodal nodes of an oriented ring. Hence the first step of generalizing our considerations to arbitrary graphs should be characterizing the class of graphs for which rendezvous is possible regardless of the initial positions of the agents and regardless of their starting times.

\section{Acknowledgements}

Andrzej Pelc was partially supported by NSERC discovery grant 2018-03899 and by the Research Chair in Distributed Computing at the Universit\'e du Qu\'{e}bec en Outaouais.


\end{document}